%
\documentclass{llncs}
\usepackage{graphicx}

\graphicspath{{./figs/}}
\usepackage{url}

\usepackage{extarrows}
\usepackage{graphics}
\usepackage{multirow}
\usepackage{amsfonts}
\usepackage{stmaryrd}

\usepackage{color}
\usepackage{subfigure}
\usepackage{amssymb}
\usepackage{hyperref}
\usepackage{algorithmic}
\usepackage{amsmath}
\usepackage[fleqn,tbtags]{mathtools}
\usepackage{txfonts}
\usepackage{float}

\usepackage{listings}

\usepackage{wasysym}
\usepackage{enumerate}

\makeatletter
\begingroup \catcode `|=0 \catcode `[= 1
\catcode`]=2 \catcode `\{=12 \catcode `\}=12
\catcode`\\=12 |gdef|@xcomment#1\end{comment}[|end[comment]]
|endgroup
\def\@comment{\let\do\@makeother \dospecials\catcode`\^^M=10\def\par{}}
\def\begincomment{\@comment\@xcomment}
\makeatother
\newenvironment{comment}{\begincomment}{}


\usepackage{url}
\usepackage{cite}

\setcounter{tocdepth}{3}
\setcounter{secnumdepth}{3}

\newcommand{\techreport}[2]{#2}

\newcounter{NumofComms}
\newcommand\mynote[2]{\textcolor{red}{\sf #1: $\clubsuit$ #2$\clubsuit$}\addtocounter{NumofComms}{1}}

\def\SEQ#1{\mynote{SE\arabic{NumofComms}?}{#1}}

\def\dbrkts#1{{ [\! [ #1 ] \! ]}}

\def\LTS{\mathcal{S}}
\def\Reach{\mathit{Reach}}
\def\Cover{\mathit{Cover}}

\def\class{\mathit{class}}
\def\classes{\mathcal{C}}
\def\objects{O}
\def\allobjects{\mathbf{\mathcal{O}}}
\def\absnodes{\mathcal{V}}

\def\err{\mathit{Err}}
\def\idthis{\mathsf{this}}
\def\idnew{\mathsf{new}}
\def\idnull{\mathsf{null}}
\def\main{\mathsf{main}}
\def\retval{\mathsf{ret}}
\def\This{\mathit{this}}
\def\Null{\mathit{null}}
\def\nil{\mathit{nil}}
\def\curq{\mathit{q}}
\def\store{\mathit{st}}
\def\erroratr{\mathit{error}}

\def\error{\mathit{U}_{\mathit{err}}}
\def\abserror{\error^\#}

\def\havoc#1{\textrm{havoc}(#1)}
\def\assume#1{\textrm{assume}(#1)}

\def\threeabs{\mathit{cart}}

\def\heapabs{\mathit{heap}}
\def\concconfigs{\mathit{Conf}}
\def\conconfs{\concconfigs}
\def\absconfs{\conconfs^\#}
\def\absdom{D^\#}
\def\init{\mathit{U}_0}
\def\absinit{\init^\#}
\def\idealconf{I^\#}
\def\oidealconf{J^\#}
\def\absify#1{[#1]}
\def\allqidealconfs{\mathit{QIdlConf}^\#}
\def\allidealconfs{\mathit{IdlConf}^\#}
\def\allfinidealconfs{\allidealconfs_0}
\def\idealconfs{\mathcal{I}_\ideal}
\def\idealinit{\mathcal{I}_0}
\def\idealtrans{\ideal}
\def\idealwiden{\nabla_\ideal}
\def\extrapol{\chi_\ideal}
\def\transrel#1{\stackrel{#1}{\rightarrow}}

\def\idealtransrel#1{\mathrel{\transrel{#1}\!\!{}^\#_\ideal}}

\def\setconcconfigs{{\mathit{D}}}
\def\scope{{\mathit{scope}}}

\def\nlf{\mathit{nl}}
\def\twoheadsearrow{\ensuremath{\rotatebox[origin=c]{-40}{$\twoheadrightarrow$}}}

\newcommand{\copylayer}[1]{\stackrel{#1}{\twoheadrightarrow}}
\newcommand{\unfoldlayer}[1]{\mathrel{\twoheadsearrow_{#1}}}
\newcommand{\fold}{\mathit{fold}}
\newcommand{\reduce}{\mathit{reduce}}

\def\booltrue{1}
\def\boolfalse{0}

\def\var{x}
\def\Exp{{\mathsf{Exp}}}
\def\Pred{{\mathsf{Pred}}}

\def\Op{{\mathsf{Op}}}
\def\op{{\mathit{op}}}

\def\wlp{{\mathsf{wp}}}
\def\post{{\mathsf{post}}}
\def\abspost{{\post^\#}}
\def\reppost{{\mathsf{Post}^\#}}

\def\preset{{\mathsf{wp_s}}}
\def\postset{{\mathsf{sp_s}}}

\def\unarpreds{\mathit{AP}}
\def\funcpreds{\mathit{AR}}
\def\pv{\eta}
\def\absstore{\store}

\def\bool{\mathsf{h}}

\def\heap{\mathsf{h}}
\def\ideal{\mathsf{idl}}

\def\lsp{\mathit{lsp}}

\def\interface{\mathit{DPI}}
\def\objmap{\varphi}

\newcommand{\dc}{{\downarrow}}
\newcommand{\uc}{{\uparrow}}
\def\Idl{\mathit{Idl}}

\newcommand{\den}[1]{\llbracket #1 \rrbracket}
\newcommand{\set}[1]{\{#1\}}
\newcommand{\pset}[2]{\set{\,#1\mid#2\,}}
\def\powerset{\mathcal{P}}
\def\fin{\mathrm{fin}}
\def\pto{\rightharpoonup}
\def\lfp{\mathit{lfp}}
\newcommand{\layer}[2]{#1_{\geq #2}}
\newcommand{\ite}[3]{\mathrm{if}\; #1\; \mathrm{then}\; #2\; \mathrm{else}\; #3}

\newcommand{\BB}{\mathbb{B}}
\newcommand{\NN}{\mathbb{N}}

\newcommand{\myie}{i.e.,}

\newcommand{\srcf}[1]{\textsf{#1}}
\newcommand{\srct}[1]{\texttt{#1}}

\newtheorem{exam}{exam}
\newtheorem{examp}[exam]{Example}

\newsavebox{\mylistingbox}
\newsavebox{\mylistingboxx}

\newsavebox{\mylistingboxa}
\newsavebox{\mylistingboxb}
\newsavebox{\mylistingboxc}



\def\ao{\mathit{O}}
\def\ar{\store}

\newcommand{\roles}{\mathit{R}}
\newcommand{\objectlabel}{\mathit{AL}}

\def\labelao{{\mathit{l}}}
\def\nameao{{\mathit{n}}}

\def\nestingao{{\mathit{nl}}}
\def\classFromLabel{{\mathit{class}}}

\newcommand\pfun{\mathrel{\ooalign{\hfil$\mapstochar\mkern5mu$\hfil\cr$\to$\cr}}}

\def \sig{{\mathit{sig}}}

\def\method{{\mathcal{M}}}
\def\absheap{\mathcal{H}}
\def\errheap{\mathcal{E}}
\def\conheap{\mathcal{G}}
\def\maps{{\mathit{\Omega}}}

\def\preobj#1{[#1]}
\def\nd{nd}
\def\err{err}

\def\ncallee{\mathbf{callee}}
\def\narg#1{\mathbf{arg_{#1}}}
\def\nnew#1{\mathbf{new_{#1}}}
\def\nscope#1{\mathbf{scope_{#1}}}

\def\srca#1{\mathit{src(#1)}}


%

%
\begin{document}
\mainmatter              
\title{A Notion of Dynamic Interface for Depth-Bounded Object-Oriented Packages}
\author{
Shahram Esmaeilsabzali\inst{1}\thanks{Shahram Esmaeilsabzali was at MPI-SWS when this work was done.} \and Rupak Majumdar\inst{2} \and Thomas Wies\inst{3} \and 
Damien Zufferey\inst{4}\thanks{Damien Zufferey was at IST Austria when this work was done.}
}

\institute{$^1$University of Waterloo\quad\quad $^2$MPI-SWS\quad\quad $^3$NYU \quad\quad $^4$MIT CSAIL\\
\email{sesmaeil@uwaterloo.ca},\email{rupak@mpi-sws.org}, \email{wies@cs.nyu.edu}, \email{zufferey@csail.mit.edu}
}

\maketitle              

\begin{abstract}
Programmers using software components have to follow protocols that specify when it is legal to call particular methods with particular arguments.
For example, one cannot use an iterator over a set once the set has been changed directly or through another iterator.
We formalize the notion of \emph{dynamic package interfaces} (DPI), which generalize state-machine interfaces for single objects, and give an algorithm to statically compute a sound abstraction of a DPI.
States of a DPI represent (unbounded) sets of heap configurations and edges represent the effects of method calls on the heap.
We introduce a novel heap abstract domain based on depth-bounded systems to deal with potentially unboundedly many objects and the references among them.
We have implemented our algorithm and show that it is effective in computing representations of common patterns of package usage, such as relationships between viewer and label, container and iterator, and JDBC statements and cursors.

\end{abstract}
%
\begin{comment}
Programmers using a software component have to follow protocols that specify when it
is legal to call particular methods of objects with particular arguments.
For example, 
one cannot use an iterator over a set once the set has been changed directly
or through another iterator. 
We formalize the notion of {\em dynamic package interfaces} (DPI), which generalize
state-machine interfaces for single objects studied previously, 
and give an automatic static algorithm to compute a sound abstraction of a dynamic
package interface.
States of a DPI represent (unbounded) sets of heap configurations and edges
represent the effects of method calls on the heap.
%
Technically, we introduce a novel heap abstract domain
to deal with potentially unboundedly many object instances and their inter-relations,
extending shape analysis with ideal abstractions for depth-bounded systems.
We have implemented our algorithm for a Java-like source language, and show that
our algorithm is effective in computing representations of common patterns
of package usage, such as relationships between
viewer and label, container and iterator, and JDBC statements and cursors.
\end{comment}

\section{Introduction}

Modern object-oriented programming practice uses packages to encapsulate
components, allowing programmers to use these packages through well-defined
application programming interfaces (APIs).
While programming languages such as Java or C\#
provide a clear specification of the {\em static} APIs
of components in terms of classes and their (typed) methods,
there is usually no specification of the {\em dynamic} behavior of
packages that constrain the temporal ordering of method calls
on different objects.
For example, one should invoke the {\em lock} and {\em unlock}
methods of a lock object in alternation; any other sequence raises an exception.
More complex constraints connect method calls on objects of different classes.
For example, in the Java Database Connectivity (JDBC) package, a \srcf{ResultSet} object, 
which contains the result of a database query executed by a \srcf{Statement} object, 
should first be closed before its corresponding \srcf{Statement} object can execute a new query.

In practice, such temporal constraints are not formally specified, but explained
through informal documentation and examples, leaving programmers susceptible to 
bugs in the usage of APIs.
Being able to specify dynamic interfaces for components that capture these 
temporal constraints clarify constraints imposed by the package on client code. 
Moreover, program analysis tools may be able to automatically check 
whether the client code invokes the component correctly according to such an interface. 

Previous work on mining dynamic interfaces through static and dynamic techniques
has mostly focused on the single-object case (such as a lock object) 
\cite{Whaley02:Automatic,Alur05:Synthesis,Li05:PR,Henzinger05:Permissive,DBLP:conf/fase/GiannakopoulouP09}, and rarely on
more complex collaborations between several different classes (such as JDBC clients) interacting
through the heap \cite{Ramalingam02:Deriving,Nanda05:Deriving,Pradel12:Static}.
In this paper, we propose a systematic, static approach for extraction of 
dynamic interfaces from existing object-oriented code. 
Our work is closely related to the Canvas project \cite{Ramalingam02:Deriving}.
Our new formalization can express structures than could not be expressed in previous work (i.e. nesting of graphs).

More precisely, we work with {\em packages}, which are sets of classes.
A configuration of a package is a concrete heap containing objects from the package
as well as references among them. 
A \emph{dynamic package interface} (DPI) specifies, given a history of constructor and method calls on objects in the package,
and a new method call, if the method call can be executed by the package without causing an error.
In analogy with the single-object case, we are interested in representations of DPIs as finite state machines,
where states represent sets of heap configurations and transitions capture the effect of a method call
on a configuration. 
Then, a method call that can take the interface to a state containing erroneous configurations is not allowed by the interface,
but any other call sequence is allowed.

The first stumbling block in carrying out this analogy is that the number of states of an object, 
that is, the number of possible valuations of its attributes, as well as 
the number of objects living in the heap, can both be unbounded.
As in previous work \cite{Henzinger05:Permissive,Ramalingam02:Deriving}, we can bound the state space of a single 
object using {\em predicate abstraction},
that tracks the abstract state of the object defined by a set of logical formulas
over its attributes.
However, we must still consider unboundedly many objects on the heap and their inter-relationships.
Thus, in order to compute a dynamic interface, we must address the following challenges.
\begin{enumerate}
\item The first challenge is to define a finite representation for possibly unbounded heap configurations
and the effect of method calls.
For single-object interfaces, states represent a subset of finitely-many attribute valuations,
and transitions are labeled with method names.
For packages, we have to augment this representation for two reasons.
First, the number of objects can grow unboundedly, for example, through repeated calls
to constructors, and we need an abstraction to represent unbounded families of configurations.
Second, the effect of a method call
may be different depending on the receiver object and the arguments, 
and it may update not only the receiver and other objects transitively reachable from it,
but also other objects that can reach these objects.
\item The second challenge is to compute, in finite time, a dynamic interface using the preceding
representation.
For single-object interfaces \cite{Alur05:Synthesis,Henzinger05:Permissive}, interface construction roughly reduces to
abstract reachability analysis against the most general client (a program that non-deterministically calls all available methods in a loop).
For packages, it is not immediate that abstract reachability analysis will terminate, as our abstract
domains will be infinite, in general.
\end{enumerate} 
We address these challenges as follows. 
First, we describe a novel shape domain for finitely representing 
infinite sets of heap configurations as recursive unfoldings of nested graphs. 
Technically, our shape domain combines predicate abstraction~\cite{Sagiv02:Shape,Podelski05:Boolean}, 
for abstracting the internal state of objects, with 
sets of depth-bounded graphs represented as nested graphs~\cite{DBLP:conf/fossacs/WiesZH10}.
Each node of a nested graph is labelled with a valuation of the abstraction predicates 
that determine an equivalence class for objects of a certain class.

Second, we describe an algorithm to extract the DPI from this finite state abstraction
based on abstract reachability analysis of depth-bounded graph rewriting systems \cite{Zufferey12:Ideal}.
We use the insight that the finite state abstraction can be reinterpreted as a numerical program.
The analysis of this numerical program yields detailed information about how 
a method affects the state of objects when it is called on a concrete heap configuration, and how many objects are effected by the call.

We have implemented our algorithm on top of the Picasso abstract reachability tool for depth-bounded graph rewriting systems. 
We have applied our algorithm on a set of standard benchmarks written in a Java-like OO language, 
such as container-iterator, JDBC query interfaces, etc. 
In each case, we show that our algorithm produces an intuitive DPI for the package within a few seconds.
This DPI can be used by a model checking tool to check conformance of a client program using the package
to the dynamic protocol expected by the package.


%


\section{Overview: A Motivating Example}\label{sect:overview}

We illustrate our approach through a simple example.

\smallskip
\noindent
\emph{Example.}
Figure \ref{code:vl} shows two classes \srcf{Viewer} and \srcf{Label} in a package, 
adapted from \cite{Nanda05:Deriving}, and inspired by
an example from Eclipse's \srcf{ContentViewer} and \srcf{IBaseLabelProvider} classes.
A \srcf{Label} object throws an exception if its 
\srcf{run} or \srcf{dispose} method is called after the \srcf{dispose} method has been called on it.
There are different ways that this exception can be raised. 
For example, if a \srcf{Viewer} object sets its \srcf{f} reference to the same \srcf{Label} object twice, 
after the second call to \srcf{set}, the \srcf{Label} object, which is already disposed, raises an exception.
As another example, for two \srcf{Viewer} objects that have their \srcf{f} reference attributes point to the 
same \srcf{Label} object, when one of the objects calls its \srcf{done} method, 
if the other object calls its \srcf{done} method an exception will be raised.
An \emph{interface} for this package should provide possible configurations of the heap
when an arbitrary client uses the package, and describe all
usage scenarios of the public methods of the package that do not raise an exception.

\lstset{language=Java,basicstyle=\footnotesize\ttfamily,columns=spaceflexible}
\lstset{frame=none,linewidth = 7cm} \lstset{numberstyle=none}
\lstset{showstringspaces=false} \lstset{showspaces=false}
\lstset{morekeywords={pointcut,execution,target,Object,before,after}}
\lstset{escapechar=\%} \lstset{xleftmargin=4pt}

\begin{lrbox}{\mylistingboxa}%
\begin{small}
\begin{lstlisting}[mathescape]
class Viewer {
  Label f;
  public void Viewer() { 
    f := null; }
  public void run() { 
    if (f != null) f.run(); }
  public void done() { 
    if (f != null) f.dispose(); }
  public void set(Label l){
    if (f != null) f.dispose(); 
    f := l; }
}
\end{lstlisting}	
\end{small}
\end{lrbox}

\begin{lrbox}{\mylistingboxb}%
\begin{small}
\begin{lstlisting}[mathescape]
class Label {
  boolean disposed;
  public void Label() { 
    disposed := false; 
  }
  protected void run() { 
    if (disposed) throw new Exception(); }
  protected void dispose() { 
    if (disposed) throw new Exception(); 
    disposed := true; } 
}
\end{lstlisting}
\end{small}
\end{lrbox}
\newsavebox{\tempbox}

\begin{figure*}[t]
{\small
\subfigure[The \srcf{Viewer} class]{
\usebox{\mylistingboxa}
\label{fig:listitera}
}
\subfigure[The \srcf{Label} class]{
\usebox{\mylistingboxb}
\label{fig:listiterb}
}\\
\begin{comment}
\subfigure[Interface-summary automaton]{\input{figs/vlinterface.pstex_t}
\label{fig:vlinterface}
} \quad
\newskip\subfigcapskip \subfigcapskip = 5pt
\subtable[The corresponding method calls of the transitions.]{
		\small
    \begin{tabular}[b]{|l | l | l | l |}
    \hline
    \multicolumn{4}{|c|}{Transitions} \\ \hline
    $t_0$ & $\mathit{V_0}.\mathtt{run()}$ & $t_7$ & $\mathit{V_0}.\mathtt{set(\mathit{L_d})}$\\ \hline
    
    $t_1$ & $\mathtt{new~Viewer()}$ & $t_8$ & $\mathit{V_{\nd}}.\mathtt{set(\mathit{L_{\nd}})}$ where ($L_{\nd} = V_{\nd}.f$)\\ \hline
    
    $t_2$ & $\mathit{V_0}.\mathtt{done()}$ & $t_9$ &  $\mathit{V_d}.\mathtt{done()}$\\ \hline
    
    $t_3$ & $\mathit{V_d}.\mathtt{set(\mathit{*\!:\!Label})}$ where $* \in \{V_d,V_{\nd}\}$& $t_{10}$ & $\mathit{V_{\nd}}.\mathtt{run()}$\\ \hline
    
    $t_4$ & $\mathit{V_d}.\mathtt{run()}$ & $t_{11}$ & $\mathit{V_{\nd}}.\mathtt{set(\mathit{L_{\nd}})}$ where ($L_{\nd} \neq V_{\nd}.f$)\\ \hline

    $t_5$ & $\mathtt{new~Label()}$& $t_{12}$ & $\mathit{V_0}.\mathtt{set(\mathit{L_{\nd}})}$\\ \hline
    
    $t_6$ & $\mathit{V_{\nd}}.\mathtt{done()}$ & $t_{13}$ & $\mathit{V_{\nd}}.\mathtt{set(\mathit{L_{d}})}$ \\ \hline
    \end{tabular}
    \label{tab:trans}
}
\newskip\subfigcapskip \subfigcapskip = 0pt
\\
\end{comment}

\subfigure[Abstract heap $\mathit{H_0}$]{\raisebox{13mm}{\input{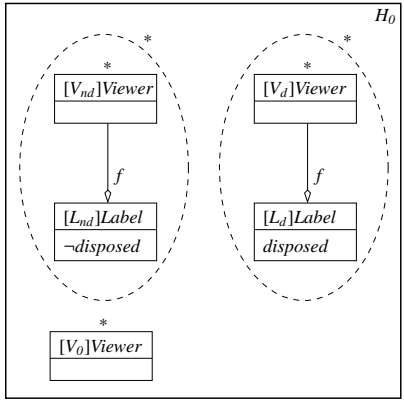}}
\label{fig:vlcover1}
}
\hspace{2em}
\subfigure[Abstract heap $\mathit{H_{Err}}$]{\input{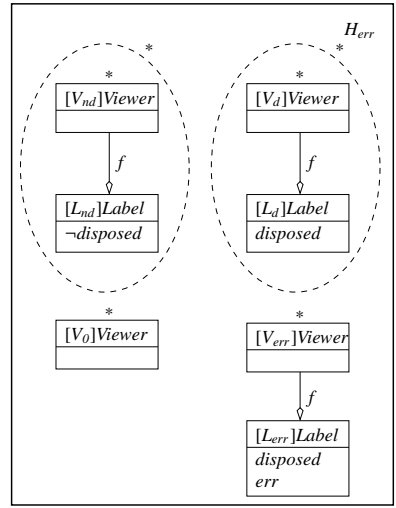}
\label{fig:vlcovererr}
}
}
\caption{A package consisting of \srcf{Viewer} and \srcf{Label} classes and its two abstract heaps}
\label{code:vl}
\end{figure*}

\smallskip
\noindent\emph{Dynamic Package Interface.}
Intuitively, an interface for a package summarizes all possible ways for a client to make calls into the package
(i.e., create instances of classes in the package and call their public methods).
In the case of single-objects, where all attributes are scalar-valued,
interfaces are represented as finite-state machines with transitions labeled
with method calls \cite{Whaley02:Automatic,Alur05:Synthesis,Henzinger05:Permissive}.
Each state $s$ of the machine represents a set $\dbrkts{s}$ of states of the object,
where a state is a valuation to all the attributes.
(In case there are infinitely many states, the methods of \cite{Alur05:Synthesis,Henzinger05:Permissive}
abstract the object relative to a finite set of predicates, so that the number of states is finite.)
An edge $s\smash{\xrightarrow{m}} t$ indicates that calling the method $m()$ from any state in $\dbrkts{s}$
takes the object to a state in $\dbrkts{t}$.
Some states of the machine are marked as errors: these represent inconsistent states, and method
calls leading to error states are disallowed.

Below, we generalize such state machines to packages.

\smallskip
\noindent\emph{States: Ideals over Shapes.}
The first challenge is that the notion of a state is more complex now.
First, there are arbitrarily many states: for each $n$, we can have a state with $n$ instances of \srcf{Label}
(e.g., when a client allocates $n$ objects of class \srcf{Label}); moreover, we can have more complex configurations
where there are arbitrarily many viewers, each referring to a single \srcf{Label}, where the \srcf{Label} may
have \srcf{disposed = true} or not. 
We call sets of (potentially unbounded) heap configurations \emph{abstract heaps}.

Our first contribution is a novel finite representation for abstract heaps.
We represent abstract heaps using a combination of \emph{parametric shape analysis} \cite{Sagiv02:Shape} 
and \emph{ideal abstractions for depth-bounded systems} \cite{Zufferey12:Ideal}.
As in shape analysis, we fix a set of unary predicates, and abstract each object w.r.t.\ these predicates.
For example, we track the predicate $\mathtt{disposed(}\mathit{l}\mathtt{)}$ to check if an object $l$ of type \srcf{Label} has
\srcf{disposed} set to true.
Additionally, we track references between objects by representing the heap as a nested graph whose nodes represent predicate abstractions of
objects and whose edges represent references from one object to another. Unlike in parametric shape analysis, references are always determinate and the abstract domain is therefore still infinite.

Figure~\ref{fig:vlcover1} shows an abstract heap $H_0$ for our example.
There are five nodes in the abstract heap.
Each node is labelled with the name of its corresponding class and a valuation of predicates, and represents an object of the specified class whose state satisfies the predicates.
Some nodes have an identifier in square brackets in order to easily refer to them.
For instance, $V_{\nd}$ represents a \srcf{Viewer} object and $L_d$ represents a \srcf{Label} object for which \srcf{disposed} is true.
Edges between nodes show field references: the edge between the $V_d$ and $L_d$ objects that is labeled with $f$ shows that objects
of type $V_d$ have an $f$ field referring to some object of type $L_d$.
Finally, nodes and subgraphs can be marked with a ``*''.
Intuitively, the ``*'' indicates an arbitrary number of copies of the pattern within the scope of the ``*''.
For example, since $V_d$ is starred, it represents arbitrarily many (including zero) \srcf{Viewer} objects
sharing a \srcf{Label} object of type $L_d$. 
Similarly, since the subgraph over nodes $V_d$ and $L_d$ is starred, it represents configurations with
arbitrarily many \srcf{Label} objects, each with (since $V_d$ is starred as well)
arbitrarily many viewers associated with it.

Figure~\ref{fig:vlcovererr} shows a second abstract heap $H_{\err}$. This one has two extra nodes in addition
to the nodes in $H_0$, and represents erroneous configurations in which the \srcf{Label} object is about to throw
an exception in one of its methods. (We set a special error-bit whenever an exception is raised, and
the node with object type $L_{\err}$ represents an object where that bit is set.)

Technically (see Sections~\ref{sec:abstractsem} and~\ref{sec:analysis}), nested graphs represent ideals of
downward-closed sets (relative to graph embedding) of configurations of depth-bounded predicate abstractions of the heap.
While the abstract state space is infinite, it is well-structured, and abstract reachability analysis can be done \cite{Abdulla96:General,Meyer08OnBoundednessInDepth,DBLP:conf/fossacs/WiesZH10}.

\smallskip
\noindent\emph{Transitions: Object Mappings.}
Suppose we get a finite set $\mathcal{S}$ of abstract heaps represented as above.
The second challenge is that method calls may have parameters and may change the state of the receiver object
as well as objects reachable from it or even objects that can reach the receiver.
As an example, consider a set container object with some iterators pointing to it.
Removing an element through an iterator can change the state of the iterator (it may reach the end), 
the set (it can become empty),
as well as other iterators associated with the set (they become invalidated and may not be used to traverse the set).
Thus, transitions cannot simply be labeled with method names, but must also indicate which abstract objects participate in the call
as well as the effect of the call on the abstract objects.
The interface must describe the effect of the heap in all cases, and all methods.
In our example, we can enumerate 14 possible transitions from $H_0$. 
To complete the description of an interface, we have to (1) show how a method call transforms the abstract heap,
and (2) ensure that each possible method call from each abstract heap in $\mathcal{S}$ ends up in an abstract heap
also in $\mathcal{S}$.

\begin{figure}[t]
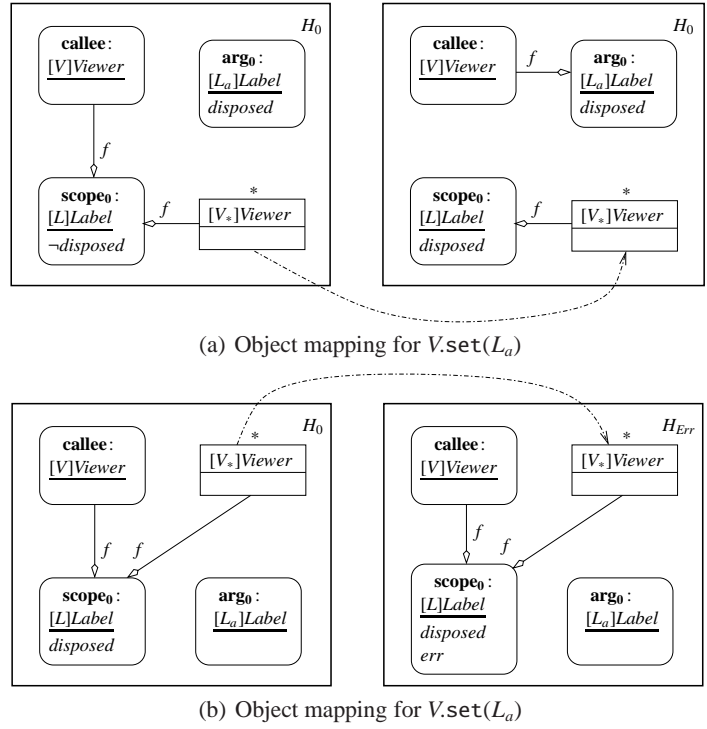
%
\centering
\subfigure[Object mapping for $V.\mathtt{set(}L_a\mathtt{)}$]{\input{figs/vl13.pstex_t}
\label{fig:vlt13}
}\qquad 
\subfigure[Object mapping for $V.\mathtt{set(}L_a\mathtt{)}$]{\input{figs/vl3.pstex_t}
\label{fig:vlt3}
}
\caption{Two object mappings for the package in Figure \ref{code:vl}}
\label{fig:vlmaps}
\end{figure}

Consider invoking the \srcf{set} method of a viewer in the abstract heap $H_0$.
There are several choices: one can choose in Figure~\ref{fig:vlcover1} an object of type $V_d$, $V_{nd}$, or $V_0$ as the callee, and pass it an object of type $L_d$ or $L_{nd}$.
Note that the method call captures the scenario in which one \emph{representative} object is chosen from each node
and the method is executed. Recall that, because of stars, a single node may represent multiple objects.
Figure~\ref{fig:vlt13} shows how the abstract heap is transformed if we choose a viewer pointing to a label which is not disposed as the callee and pass it a disposed label as argument.
The box on the left specifies the \emph{source} heap before the method call and the box on the right specifies the \emph{destination} heap after the method call.
A representative object in a method call is graphically shown by a rounded box and has a \emph{role name} that prefixes its object type. 
The source heap includes three representative objects with role names: $\ncallee$, $\narg{0}$, and $\nscope{0}$.
The $\ncallee$ and $\narg{0}$ role names determine the callee object and the parameter object of the method call, respectively.
The $\nscope{0}$ is a $\mathit{Label}$ object that is in the \emph{scope} of the method call: i.e., the method call affects its type or the valuation of its predicates.
Lastly, there is a fourth object in the left box that is not a single representative, but a starred object $V_*$ that represents all viewers other than the callee object that reference the object with role $\nscope{0}$.
The following properties hold.
First, both the source and the destination of the transition are $H_0$, hence,
the method call transforms objects in the abstract heap $H_0$ back to $H_0$.
Second, any object in $H_0$ that is not mentioned in the source box is untouched by the method call.
Third, each object in the left box is mapped to another representative object in the right box: The representative objects can be traced via their role names while the other objects via the arrows that specifies their new types (to model non-determinism, such an arrow can be a multi-destination arrow).
Thus, $V.\mathtt{set(}L_a\mathtt{)}$ transforms the callee object by changing its reference $f$ to the $L_a$ object that was the parameter of the method call.
The object $L$ that the callee referenced before the method call get the value of its disposed predicate changed to true after the method call.
All other objects represented by $V_*$ that reference $L$ continue referencing that object.

\begin{comment}
Third, updates to nodes are shown using directed hyperedges (dotted multi-destination lines in the figure).
The hyperedges are optionally labeled with ``1'' (e.g., the hyperedge from $L_{nd}$ to the destination $L_d$).
The hyperedge labeled 1 indicates that one object from the source node transfers to the state in the destination;
every other node in the source node moves (non-deterministically) to some destination of the hyperedge (that is not marked by ``1'').

So, the method call $V_{nd}.\mathtt{set}(L_d)$ moves one viewer object from $V_{nd}$ to $V_d$ (the callee), and all other
viewer objects in state $V_{nd}$ non-deterministically to either $V_d$ or $V_{nd}$.
At the same time, one $L_{nd}$ object (the old viewer pointed to by the callee) moves to $L_d$, and all other remain in $L_{nd}$. Finally, all $L_d$ objects (including the parameter of the call) remain in $L_d$.
Intuitively, this captures the situation that the previous (non-disposed) label object is now disposed,
and the viewer now points to a disposed label.
\end{comment}

The second transition, in Figure~\ref{fig:vlt3}, shows what happens if \srcf{set} is called on $V_d$ with any label.
This time, an error occurs, since the method call tries to dispose an already disposed label.
This is indicated by a transformation to the error node $H_{\err}$, and thus, is not allowed in the interface.

\smallskip
\noindent
\emph{Algorithm for Interface Computation.}
Our second contribution is an algorithm and a tool for computing the dynamic package interfaces in form of a state machine, as described above.
Conceptually, the DPI of a package is computed in two steps: (i) computing the \emph{covering set} of the package, which includes all possible configurations of the package, in a finite form; and (ii) computing the object mappings of the package using the covering set. 

\paragraph{Computing the Covering Set.} 
We introduce three layers of abstraction to obtain an overapproximation of the covering set of a package in a finite form. First, using a fixed set of predicates over the attributes of classes, we introduce a predicate abstraction layer.
Second, we remove from this predicate abstraction those reference attributes of classes that can create a chain of objects with an unbounded length; these essentially correspond to recursive data structures, such as linked lists.
We call these two abstraction layers the \emph{depth-bounded abstraction}. 
The soundness of depth-bounded abstraction follows soundness arguments similar to the ones for classic abstract interpretation. However, unlike the classic abstract interpretation of non-object--oriented programs, the depth-bounded abstraction of object-oriented packages does not in general result in a finite representation; e.g., we may still have an unbounded number of iterator and set objects, with each iterator object being connected to exactly one set object.
\begin{comment}
A key property of the depth-bounded abstraction domain, however, is that its elements can be ordered via a well-quasi order that is essentially based on the subset inclusion of objects.
Furthermore, the transition system of a package over a depth-bounded abstract domain results in a well-structured transition system.
\end{comment}

Our third abstraction layer, namely, \emph{ideal abstraction}, ensures a finite representation of the covering set of a package.
The domain of ideal abstraction is essentially the same as the domain of nested graphs.
The key property of this abstraction layer is that it can represent an unbounded number of depth-bounded objects as the union of a finite set of ideals, each of which itself is represented finitely.
The soundness of this abstraction layer follows from the general soundness result for the ideal abstraction of depth-bounded systems \cite{Zufferey12:Ideal}.

To compute the covering set of a package, we use a notion of \emph{most general client}. 
Intuitively, the most general client \cite{Henzinger05:Permissive} runs in an infinite loop; in each iteration of the loop,
it non-deterministically either allocates a new object, or picks an already allocated object,
a public method of the object, a sequence of arguments to the method, and invokes the method call on the object.
Using a widening operator over the sequence of the steps of the most general client, our algorithm is able to determine when the nesting level of an object needs to be incremented. 
Our algorithm terminates due to the fact that the ideal abstraction is a well-structured transition system.

\paragraph{Computing the Object Mappings.} 
The object mappings are computed using the covering set as starting point.
To compute the object mappings we let the most general client run one more time using the covering set as starting state of the system.
During that run we record what effect the transitions have.
For a particular transition we record, among other information, what are the starting and ending abstract heaps and the corresponding \emph{unfolded}, representative objects.
The nodes of the unfolded heap configurations are tagged with their respective roles in the transition.
Finally, we record how the objects are modified and extract the mapping of the object mapping.

\begin{comment}
Our second contribution is an algorithm and a tool for computing the dynamic package interfaces in form of a state machine, as described above.
To compute the interface of a package, we fix a set of predicates and then perform an abstract reachability analysis of the package
together with a {\em most general client}.
Intuitively, the most general client \cite{Henzinger05:Permissive} runs in an infinite loop; in each iteration of the loop,
it non-deterministically either allocates a new object of some class, or picks an already allocated object,
a public method of the object, a sequence of arguments to the method, and invokes the method call on the object.
This way, it explores all possible sequences of constructors and method calls.
The properties of ideal abstraction of depth-bounded systems \cite{Zufferey12:Ideal} ensures that the 
abstract reachability analysis terminates and produces a finite coverability tree for the package.
The nodes of the coverability tree represent abstract heaps and edges represent object maps on transitions.
We take the maximal nodes of this tree (relative to embedding of abstract heaps) as the states of our interface,
and the associated transitions as the transitions.
\end{comment}

In our example, there are two maximal nodes: $H_0$ and $H_{\err}$, where $H_{\err}$ denotes the error configurations. $H_0$ and $H_{\err}$ together represent the covering set of the package.
Accordingly, the interface shows that $H_0$ captures the ``most general'' abstract heap in the use
of this package; each ``correct'' method call corresponds to an object mapping over $H_0$. 
We omit showing the remaining 12 object mappings of the interface.

\section{Concrete Semantics}\label{sect:lang}

We now present a core OO language. 

\smallskip
\noindent
\emph{Syntax.}
%
For a set of symbols $X$ (including variables), we denote  by $\Exp.X$ and $\Pred.X$ the set of expressions 
and predicates respectively, 
constructed with symbols drawn from $X$.
We assume there are two special variables $\idthis$ and $\idnull$.  

In our language, a package consists of a collection of class definitions.
A class definition consists of a class name, a constructor method, a set of fields,
and a set of method declarations partitioned into public and protected methods.
A constructor method has the same name as the class, a list of typed arguments, and a body.
We assume fields are typed with either a finite scalar type (e.g., Boolean),
or a class name.
The former are called \emph{scalar} fields and the latter \emph{reference} fields.
Intuitively, reference fields refer to other objects on the heap.
Methods consist of a signature and a body.
The signature of a method is a typed list of its arguments and its return value.
The body of a method is given by a control flow automaton over the fields of the class. 
Intuitively, any client can invoke public methods, but only
other classes in the package can invoke protected ones.

A control flow automaton (CFA) over a set of variables $X$ and a set of operations $\Op.X$
is a tuple $F=(X,Q, q_0, q_f, T)$,
where $Q$ is a finite set of \emph{control states}, $q_0\in Q$ (resp.\ $q_f\in Q$)
is a designated initial state (resp.\ final state), and 
$T \subseteq Q\times \Op.X \times Q$ is a set of edges labeled with operations.

For our language, we define the set $\Op.X$ of \emph{operations} over $X$ to consist of: 
(i) \emph{assignments} $\idthis.x := e$, where $x\in X$ and $e \in \Exp.X$;
(ii) \emph{assumptions}, $\assume{p}$, where $p \in \Pred.(\{\idthis\} \cup X)$,
(iii) \emph{construction} $\idthis.x = \idnew(C(\bar{a}))$, where $C$ is a class name and
$\bar{a}$ is a sequence in $\Exp.X$,
and
(iv) \emph{method calls} $\idthis.x := \idthis.y.m(\bar{a})$, where $x, y\in X$. 

Formally, a class $C = (A, c, M_p, M_t)$, where $A$ is 
the set of fields, 
$c$ is the constructor,
$M_p$ is the set of public methods,
and
$M_t$ is the set of protected methods.
We use $C$ also for the name of the class.
A package $P$ is a set of classes.

We make the following assumptions.
First, all field and method names are disjoint.
Second, each class has an attribute $\retval$ used to return values from a method
to its callers.
Third, all CFAs are over disjoint control locations.
Fourth, a package is well-typed, in that assignments are type-compatible,
called methods exist and are called with the right number and types of arguments, etc.
Finally, it is not clear how the pushdown system and depth-bounded system mix and whether there exists an bqo that may accomodate both.
Therefore, we omit recursive method calls from our the analysis.

A {\em client} $I$ of a package $P$ is a class 
with exactly one method $\main$, such that
(i) for each $x\in I.A$, we have the type of $x$ is either a scalar or a class name from $P$,
(ii) in all method calls $\idthis.x = \idthis.y.m(\bar{a})$, $m$ is a public method of its class,
and 
(iii) edges of $\main$ can have the additional \emph{non-deterministic assignment} $\havoc{\idthis.x}$.
An OO program is a pair $(P, I)$ of a package $P$ and a client $I$.

\smallskip
\noindent
\emph{Concrete Semantics.}\label{sect:concsem}
We give the semantics of an OO program as a labeled transition system.
A \emph{transition system} 
$\LTS = (X,X_0,\rightarrow)$ consists of a set $X$ of states, 
a set $X_0 \subseteq X$ of initial states, and a transition relation
$\rightarrow \; \subseteq X \times X$. 
We write $x\rightarrow x'$ for $(x,x')\in\rightarrow$.

Fix an OO program $S = (P, I)$. 
It induces a transition system 
$(\concconfigs, \init, \rightarrow)$, 
with configurations $\concconfigs$, initial configurations $\init$, and transition relation $\rightarrow$ as follows.

Let $\allobjects$ be a countably infinite set of \emph{object identifiers} (or simply objects) and 
let $\class: \objects \to P \cup\set{I, \nil}$ be a function mapping each object identifier to its class.  
A \emph{configuration} $u \in \concconfigs$ is a tuple $(\objects,\This,\curq,\nu, \mathit{st})$, 
where $\objects \subseteq \allobjects$ is a finite set of currently allocated \emph{objects}, 
$\This \in \objects$ is the \emph{current object} (i.e., the receiver of the call to the method currently executed), 
$\curq$ is the \emph{current control state}, which 
specifies the control state of the CFA at which the next operation will be performed, 
$\nu$ is a sequence of triples of object, variable, and control location (the program stack),
and $\mathit{st}$ is a \emph{store}, which maps an object and a field to a value in its domain. 
We require that $\objects$ contains a unique \emph{null} object $\Null$ 
with $\mathit{class(null)} = \nil$. 
We denote by $\concconfigs$ the set of all configurations of $S$.

The set of \emph{initial configurations} $\init \subseteq \concconfigs$ is the set of configurations 
$u_0 = (\set{\Null, o_I},\This,\main.q_0,\varepsilon, \mathit{st})$ such that
(i) $\class(o_I) = I$, 
(ii) the current object $\This = o_I$, 
(iii) the value of all reference fields of all objects in the store is $\mathit{null}$ and all scalar
fields take some default value in their domain, 
and 
(iv) the control state is the initial state of the CFA of the main method of $I$ and the stack is empty. 

Given a store, 
we write $\store(e)$ and $\store(p)$ for the value of an expression $e$ or predicate $p$ evaluated in the store $\store$,
computed the usual way.
%

The transitions in $\rightarrow$ are as follows. 
A configuration $(\objects,\This,\curq,\nu, \store)$ moves to configuration $(\objects',\This',\curq',\nu', \store')$
if there is an edge $(\curq, \op, \curq')$ in the CFA of $\curq$ such that
\begin{itemize}

\item $\op = \idthis.x:=e$ and $\objects'  = \objects$, $\This' = \This$,
$\nu' = \nu$,
and $\store' = \store[(\This,x) \mapsto \store(e)]$.
\item $\op = \assume{p}$ and $\objects' = \objects$, $\This' = \This$,
$\nu' = \nu$,
$\store(p) = \booltrue$,  and $\store' = \store$.
\item $\op = \idthis.x:=\idthis.y.m(\bar{a})$ and 
$\objects'  = \objects$, $\This' = \This$,
$\nu' = (\This, x, \curq') \nu$, and $\curq' = m.q_0$, and the formal arguments of $m$ are
assigned values $\store(\bar{a})$ in the store.
\item $\op = \idthis.x :=\idnew(C(\bar{a}))$ and 
$\objects' = \objects \uplus \set{o}$ for a new object $o$ with $\class(o) = C$,
$\This' = o$,
$\nu' = (\This, x, \curq') \nu$, and $\curq' = c.q_0$ for the constructor $c$ of $C$, 
and the formal arguments of $c$ are assigned values $\store(\bar{a})$ in the store.

\item $\op = \havoc{\idthis.x}$: $\objects' = \objects$, $\This' =   \This$, and 
$\store' = \store[(\This,x) \mapsto v]$,   where $v$ is some value chosen non-deterministically from the domain of $x$. 
\end{itemize}
Finally, if $\curq$ is the final node of a CFA and $\nu = (o, x, q) \nu'$, and
the configuration $(\objects,\This,\curq,\nu, \store)$ moves to $(\objects,o,q,\nu', \store')$,
where $\store' = \store[o.x \mapsto \store(\This.\retval)]$.
If none of the rules apply, the program terminates.


To model error situations, we assume that each class has a field
$\err$ which is initially $0$ and set to $1$ whenever an error is encountered (e.g.,
an assertion is violated). 
An error configuration is a configuration $u$ in which there exists an object $o\in u.\objects$
such that $o.\err = 1$.
An OO program is \emph{safe} if it does not reach any error configuration. 

\begin{examp}
Figure~\ref{fig:iter_conc} depicts two configurations for a set of objects belonging to a ``set and iterator'' package. 
For the sake of brevity, we do not show the code for this package, but the functionality of the package is standard. 
The package has three classes, namely, \srct{Set}, \srct{Iterator}, and \srct{Elem}.
The \srct{Elem} class can create a linked list to store the elements of a \srct{Set} object.
An \srct{Iterator} object is used to traverse the elements of its corresponding \srct{Set} object via its \srct{pos} attribute as an index. 
It can also remove an element of the \srct{Set} object through its \srct{remove} method. 
An \srct{Iterator} object can perform these operations only if it has the same \emph{version} as its corresponding \srct{Set} object. 
The \srct{Iterator} version is stored in the \srct{iver} field and the \srct{Set} version in \srct{sver}.
In this example, we focus on the \srct{remove} method. 
The \srct{remove} method of an \srct{Iterator} object invokes the \srct{delete} method of its corresponding \srct{Set} object, passing its \srct{pos} attribute as a parameter.
The \srct{delete} method, in turn, deletes the \srct{pos}th \srct{Elem} object that is accessible through its \srct{head} attribute.
The version attributes of both the \srct{Iterator} and \srct{Set} objects are incremented, while the version attributes of other \srct{Iterator} objects remain the same.
The two configurations in Figure~\ref{fig:iter_conc} are abbreviated to show only the information relevant to this example.

The configuration 
\begin{equation*}
u  =  \mathit{(\{s,i_1,i_2.e_1,e_2\}, s,.,\langle (i_2,.,.) \rangle}, \mathit{\{((i1,iver),2), ((i2, iver), 2), \cdots \})}, 
\end{equation*}
\noindent depicted in Figure~\ref{fig:iter_conc}(a), is one of the configurations during the execution of $i_2.\mathtt{remove}$, namely the configuration immediately after executing $\srcf{this}\mathtt{.iter\_of.delete(}\srcf{this}\mathtt{.pos)}$.
After a number of steps, the computation reaches configuration 
\begin{equation*}
u' = \mathit{(\{s,i_1,i_2.e_1,e_2\}, s,.,\varepsilon,\{((i1, iver), 2), ((i2, iver), 3), \cdots \})}, 
\end{equation*}
depicted in Figure~\ref{fig:iter_conc}(b), which is the configuration after $o_2.\mathtt{remove()}$ has completed and the control has returned to the client, $I$.
At $u'$, $i_2$ still has the same version ($\mathit{i_2.iver}$) as $s$, ($\mathit{s.sver}$), but $i_1$ has a different version now. 
Thus, $i_1$ cannot traverse or remove an element of $s$ any more.

\begin{figure}[t]%
\centering
\input{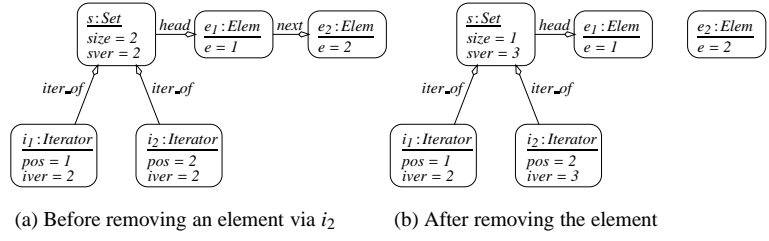}
\caption{Two configurations of set and iterator package}
\label{fig:iter_conc}
\end{figure}

\end{examp}

\section{Dynamic Package Interface (DPI)}\label{sect:formalinterface}

For a package $P$, its \emph{dynamic package interface} is essentially a set of \emph{nested object graphs} representing heap configurations together with a set of \emph{object mappings} over them, one for each distinct method invocation.

Each nested object graph represents an unbounded number of heap configurations. 
An object mapping for a method invocation specifies how the objects of a source heap configuration are transformed to the objects of a destination heap configuration.
Object mappings use an extended notion of object graphs with \emph{role labelling} to identify the callee and the arguments of the method calls.
Up to isomorphism, the set of object mappings of a DPI specify the effect of all possible public method calls on distinct heap configurations of a package.

In the remainder of this section, in Section~\ref{sect:absgraph}, we present the notions of nested object graphs and cast nested object graphs, followed by the notion of object mapping, in Section~\ref{sect:objmapping}. 
In Section~\ref{sect:dpiformal}, we present DPI formally.

\subsection{Nested Object Graphs}\label{sect:absgraph}

A \emph{nested object graph} $H$ over a package $P$ is a tuple $(\objectlabel,\funcpreds,\ao,\labelao,\ar,\nestingao)$ with
\begin{itemize}
\item $\objectlabel$ and $\funcpreds$: sets of \emph{object labels}  and \emph{reference fields}, respectively, 
\item $\ao$: a set of \emph{object nodes} identifiers,
\item $\ar: (\ao \times \funcpreds) \pfun \ao$ the \emph{reference edge} function,
\item $\labelao\!:\! \ao \rightarrow \objectlabel$ the \emph{object labelling} function,
\item $\nestingao\!:\! \ao \rightarrow \mathbb{N}_0$, the \emph{nesting level} function.
\end{itemize}
 
\noindent We call an object node with nesting level zero an \emph{object instance} and otherwise call it an \emph{abstract object}.
An abstract object represents an unbounded number of object instances. 
If an object node is connected via a reference label to another object node in $\ar$, it means that one or more object instances (depending on their relative nesting levels) in the source node have reference attributes pointing to an object instance in the destination node. 
We denote by $\classFromLabel$ the function from $\objectlabel$ to $P$ that extracts the class information from a label.

\noindent A nested object graph is \emph{well-formed} if:
$\forall (o_1,r,o_2), \ar(o_1, r) = o_2 \Rightarrow \nestingao(o_1) \geq \nestingao(o_2)$.
This constraint is necessary because it should not be possible for an object instance to reference more than one object instance with the same reference attribute.

\begin{examp}
Let us consider the graph in Figure \ref{fig:vlcover1}, which is a nested object graph.
Let the object node labelled with $\mathit{\preobj{V_{\nd}}Viewer}$ be denoted by $x$, then 
$V_{\nd}$ is the identifier that we use to refer to $x$ in the description, and we have $\mathit{\labelao(x) = Viewer}$, which tells the class of $x$ and the predicates and their valuation (none in this case). Finally, we have $\mathit{\nestingao(x) = 2}$. 
\end{examp}


A \emph{cast nested object graph} $G$ over $P$ is a tuple $(\objectlabel,\roles,\funcpreds,\ao,\labelao,\ar,\nameao,\nestingao)$ where
\begin{itemize}
\item $(\objectlabel,\funcpreds,\ao,\labelao,\ar,\nestingao)$ is a nested object graph over $P$,
\item $\roles$ is a set of \emph{object role} labels, and
\item $\nameao\!:\! \roles \rightarrow \ao$ is a \emph{role name} function.
\end{itemize}
An object of a cast nested object graph may have a role name in addition to its label.
A role name indicates the fixed responsibility of the object instance during a method call. 

A cast nested object graph can be obtained from a nested object graph by unfolding the graph and adding a role function.
The unfolding step copies a subgraph with nesting level greater than $0$ and decreases the nesting level of the copy by one.
This process is repeated until all the roles can be assigned to object instances.

A cast nested object graph is \emph{well-formed} if its role name function is injective: $\forall r_1,r_2 \in \roles,~ \nameao(r_1) = \nameao(r_2) \Rightarrow r_1 = r_2$.
%
Henceforth, we consider only well-formed nested object graphs and well-formed cast nested object graphs. 
We denote the set of all nested object graphs and the set of all cast nested object graphs over $P$ as $\mathcal{H}_P$ and $\mathcal{G}_P$, respectively. 

In our analysis, each cast nested object graph $G \in \mathcal{G}_P$ corresponds to a unique nested object graph $H \in \mathcal{H}_H$, as we will see in the next section. 
We assume the \emph{source} function $src\!:\! \mathcal{G}_P \rightarrow \mathcal{H}_P$, which determines the nested object graph of a cast nested object graph.
 
\begin{examp}
Let us consider the graph inside the box in the left hand side of Figure \ref{fig:vlt13}, which is a cast nested object graph whose source is $H_0$ in Figure \ref{fig:vlcover1}.
Let the object node labelled with $\mathit{\ncallee\!:\!\preobj{V_{\nd}}Viewer}$ be $x$, then 
$\mathit{\labelao(x) = Viewer}$,  $\mathit{\nestingao(x) = 0}$, and $\mathit{\nameao(x) = \ncallee}$. 
\end{examp}

The DPI shows the state of the system (i.e., the package together with its most general client) at the call and return points of public methods in the package.
In those states, the stack of the client is empty and $\This$ always refers to the most general client. Therefore, we omit this information in nested object graphs.
The roles in abstract graphs can be seen as a projection of the internal state of the most general client on the objects in the heap.
That is, the object instance of the most general client itself is not represented as a node in the graphs.

\subsection{Object Mapping}\label{sect:objmapping}

\paragraph{Notation.} For a package $P$, we denote by $\method_P$  the set of all its public methods: $\method_P = \bigcup_{C \in P} C.M_p$. 
For a public method $m(C_1,\cdots,C_n)$ of a class $C$, we define its
\emph{signature} as $\sig(m) = \{(C,\mathit{\ncallee}), (C_1,\narg{0}),\cdots,(C_n,\narg{n})\}$.

An \emph{object mapping} of a method $m \in \method_P$ is a tuple $(m,G,G',k)$ where $G,G' \in \mathcal{G}_P$, $k \subseteq G.\ao \times G'.\ao$ is a relation, and the following conditions are satisfied: 
\begin{itemize}
\item $G$ includes object instances for $\sig(m)$: 
$$
\forall (C,s) \in \sig(m),~ \exists o \in G.\ao,~ \classFromLabel(G.\labelao(o)) = C \wedge G.\nameao(o) = s;
$$
\item $\mathit{dom(k)} = G.\ao$; 

\item $k$ preserves the class of an object: $\forall (o_1,o_2)\in k,~ \classFromLabel(G.\labelao(o_1)) = \classFromLabel(G'.\labelao(o_2))$;

\item $k$ is functional on object instances: $\forall (o_1,o_2), (o_1,o_3) \in k,~ G.\nestingao(o_1) = 0 \Rightarrow o_2 = o_3$;

\item $k$ preserves the nesting level of object instances: \\
$\forall (o_1,o_2)\in k,~ G.\nestingao(o_1) = 0 \Leftrightarrow G'.\nestingao(o_2) = 0$;

\item $k$ preserves the role names of object instances: \\
$\forall (o_1,o_2)\in k,~ G.\nestingao(o_1) = 0 \Rightarrow G.\nameao(o_1)= G'.\nameao(o_2)$. 

\end{itemize}

For a set $M \subseteq \mathcal{G}_P$, by $\mathit{Maps}_P(M)$ we denote the set of all object mappings $(m,G,G',k)$ of package $P$ such that $G,G' \in M$.

An object mapping is a compact representation of the effect that a method call has on the objects of a package.
The mapping specifies how objects are transformed by the method call.
A pair $(o_1,o_2) \in k$ indicates that each concrete object represented by the abstract object $o_1$ might become part of the target abstract object $o_2$.
The total number of concrete objects is always preserved.
Because nested object graphs can represent more than one concrete state, there can be more than one object mapping associated with a given method call and source graph, as well as multiple target objects for each source object in the source graph of one object mapping.

\begin{examp}
Let us consider the two cast nested object graphs inside the boxes in the left and right hand side of Fig.~\ref{fig:vlt13}. 
Denote these two graphs by $G$ and $G'$.
Figure~\ref{fig:vlt13} then represents the object mapping: $(\mathtt{set},G,G',\{(V,V),(L_a,L_a),(L,L),(V_*,V_*)\}).$
\end{examp}

Note that in addition to $\ncallee$ and $\narg{0}$ role names, the object mapping in Figure~\ref{fig:vlt13} also uses $\nscope{0} \in G.\roles$, which labels an object instance that is not part of the signature of the method. 
The $\nscope{i}$ role names are used to label all such object instances.
One last type of role names that are used by object mappings is $\nnew{i}$ role names, which label the objects that are created by a method call.
To improve the readability of some figures we omit abstract objects that are not modified.
We show only the objects part of the connected component affected by the call.

\subsection{Definition: DPI}\label{sect:dpiformal} 

A DPI of a package $P$ is a tuple $(\absheap,\conheap,\maps,\errheap)$ where
\begin{itemize}
\item $\absheap \subseteq \mathcal{H}_P$ is a finite set of nested object graphs,
\item $\conheap \subseteq \mathcal{G}_P$ is a finite set of cast nested object graphs,
\item $\maps \subseteq \mathit{Maps}_P(\conheap)$ the set of \emph{object mappings}; and
\item $\errheap \subseteq \absheap$ the set of \emph{error} nested object graphs.
\end{itemize}
The DPI $(\absheap,\conheap,\maps,\errheap)$ is \emph{well-formed} if: 
\begin{enumerate}
\item the castgraphs come from $\absheap$: $\forall G \in \conheap,~ \srca{G} \in \absheap$

\item it is \emph{safe}: $\forall (m,G,G') \in \maps,~ \srca{G} \in (\absheap - \errheap)$; and

\item it is \emph{complete} in that a non-error covering nested object graph has a mapping for all methods: 
$$
\begin{array}{c}
\forall H \in (\absheap - \errheap),~ \forall o \in H.\ao,~ \forall m \in \classFromLabel(G.\labelao(o)).M_p,~ \exists (m,G,G') \in \maps,~ \srca{G} = H.
\end{array}
$$
\end{enumerate}

Well-formed DPIs characterize the type of interface that we are interested in computing for OO packages. 
Following the analogy between a DPI and an FSM, the set 
of nested object graphs correspond to the ``states'' of the state machine and the set of object mappings correspond to the ``transitions''. 
Section~\ref{sect:compdpi} describes how a well-formed DPI can be computed for a package soundly via an abstract semantics that simulates the concrete semantics of Section~\ref{sect:lang}.
Henceforth by a DPI, we mean a well-formed DPI.

A DPI can be understood in two ways.
The first interpretation comes directly from the abstract OO program semantics of Section~\ref{sect:compdpi}.
The second interpretation views the DPI as a counter program.
In this program each $H \in \absheap$ has a control location and for each node in $H.O$ there is a counter variable.
The value of a counter keeps track of the number of concrete objects that are represented by the corresponding abstract object node.
Object mappings can be translated into updates of the counters.
Further details of that interpretation can be found in Section~\ref{sect:compsubsect} and \cite{StructuralCounterAbs}.

\section{Abstract Semantics for Computing DPI}\label{sect:compdpi}
In this section, we present the abstraction layers that we use to compute the DPI of a package.
Section~\ref{sec:abstractsem} presents our \emph{depth-bounded abstract} domain, which ensures that any chain of objects of a package has a bounded depth when represented in this domain.
Section~\ref{sec:analysis} presents our \emph{ideal abstract} domain, which additionally ensures that any number of objects of a package are represented finitely.
Section~\ref{sect:compsubsect} describes how the DPI of a package can be computed by encoding the ideal abstract interpretation of a package as a numerical program.

\subsection{Preliminaries}

For a transition system $\LTS = (X,X_0,\rightarrow)$,
we define the \emph{post operator} 
as $\post.\LTS: \powerset(X) \rightarrow \powerset(X)$ with 
$\post.\LTS(Y) = \pset{x' \in X}{\exists x \in Y.\, x   \rightarrow  x'}$. 
The \emph{reachability set} of $\LTS$, denoted
$\Reach(\LTS)$, is defined by $\Reach(\LTS) = \lfp^\subseteq(\lambda
Y. X_0 \cup \post.\LTS(Y))$.

A \emph{quasi-ordering} $\leq$ is a reflexive and transitive relation
$\leq$ on a set $X$. In the following $X(\leq)$ is a quasi-ordered
set. 
The \emph{downward closure} (resp.\ \emph{upward closure}) 
of $Y \subseteq X$ is $\dc{Y} = \pset{x \in X}{\exists y \in Y.\, x \leq y}$
(resp.\ $\uc{Y} = \pset{x \in X}{\exists y \in Y.\, y \leq x}$).
A set $Y$ is \emph{downward-closed} (resp.\ \emph{upward-closed}) if $Y = \dc{Y}$ (resp.\ $Y=\uc{Y}$). 
An element
$x \in X$ is an \emph{upper bound} for $Y \subseteq X$ if for all $y \in Y$ we have $y \leq x$. 
A nonempty set $D \subseteq X$ is \emph{directed} if any two
elements in $D$ have a common upper bound in $D$. 
A set $I \subseteq X$ is an \emph{ideal} of $X$ if $I$ is downward-closed and
directed. 
%
A quasi-ordering $\leq$ on a set $X$ is a \emph{well-quasi-ordering}
(wqo) if any infinite sequence $x_0,x_1,x_2,\ldots$ of elements from
$X$ contains an increasing pair $x_i \leq x_j$ with $i < j$. 

%
%
A \emph{well-structured transition system} (WSTS) is a tuple $ \LTS =
(X,X_0,\rightarrow,\leq)$ where
$(X,X_0,\rightarrow)$ is a transition system and $\leq \; \subseteq X \times X$ is a wqo that 
is \emph{monotonic} with respect to $\rightarrow$, i.e.,
for all $x_1,x_2,y_1,t$ such that $x_1 \leq
y_1$ and $x_1 \rightarrow x_2$, there exists $y_2$ such that $y_1
\rightarrow y_2$ and $x_2 \leq y_2$. The \emph{covering set} of a
well-structured transition system $\LTS$, denoted $\Cover(\LTS)$, is
defined by $\Cover(\LTS) = \dc \Reach(\LTS)$.

\subsection{Depth-Bounded Abstract Semantics}
\label{sec:abstractsem}

We now present an abstract semantics for OO programs. 
Given an OO program $S$, our abstract semantics of $S$ is a transition system 
$S_\heap^\# = (\absconfs,\absinit,\rightarrow^\#_\heap)$ that is 
obtained by an abstract interpretation~\cite{CousotCousot79ProgramAnalysisFrameworks} of $S$. 
Typically, the system $S_\heap^\#$ is still an infinite state system. 
However, the abstraction ensures that $S_\heap^\#$ belongs to the 
class of \emph{depth-bounded systems}~\cite{Meyer08OnBoundednessInDepth}. 
Depth-bounded systems are well-structured transition systems that can be effectively 
analyzed~\cite{DBLP:conf/fossacs/WiesZH10}, and this will enable us to compute
the dynamic package interface. 


%

\noindent
\emph{Heap Predicate Abstraction.}
We start with a heap predicate abstraction, following shape analysis \cite{Sagiv02:Shape,Podelski05:Boolean}.
Let $\unarpreds$ be a finite set of \emph{unary abstraction predicates} from $\Pred.(\set{\var} \cup \classes.A)$ where $\var$ is a fresh 
variable different from $\idthis$ and $\idnull$. 
For a configuration $u = (\objects,\cdot,\store)$ and $o \in \objects$, we write $u \models p(o)$ iff $\store[x\mapsto o](p)=\booltrue$.
Further, let $\funcpreds$ be a subset of the reference fields in $\classes.A$. 
We refer to $\funcpreds$ as \emph{binary abstraction predicates}. 
For an object $o \in \allobjects$, we denote by $\funcpreds(o)$ the set 
$\funcpreds \cap \class(o).A$. 

The concrete domain $D$ of our abstract interpretation is the powerset of configurations 
$D = \powerset(\conconfs)$, ordered by subset inclusion. 
The abstract domain $\absdom_\heap$ is the powerset of \emph{abstract configurations} $\absdom_\heap = \powerset(\absconfs)$, 
again ordered by subset inclusion. 
An abstract configuration $u^\# \in \absconfs$ is like a concrete configuration except that the store is abstracted by a finite labelled graph, 
where nodes are object identifiers, edges correspond to the values of reference fields in $\funcpreds$, and 
node labels denote the evaluation of objects on the predicates in $\unarpreds$. 
That is, the abstract domain is parameterized by both $\unarpreds$ and $\funcpreds$.

Formally, an abstract configuration $u^\# \in \absconfs$ is a tuple $(\objects,\This,\curq,\nu,\pv,\absstore)$ where 
$\objects \subseteq \allobjects$ is a finite set of object identifiers, 
$\This \in \objects$ is the current object, $\curq \in F.Q$ is the current control location,
$\nu$ is a finite sequence of triples $(o,x,q)$ of objects, variables, and control location,
$\pv: \objects \times \unarpreds \to \BB$ is a \emph{predicate valuation}, and 
$\absstore$ is an \emph{abstract store} that maps objects in $o \in \objects$ and reference 
fields $a \in \funcpreds(o)$ to objects $\absstore(p,a) \in \objects$.
Note that we identify the elements of $\absconfs$ up to isomorphic renaming of object identifiers.

The meaning of an abstract configuration is given by a concretization function $\gamma_\bool: \absconfs \to D$ defined as follows: 
for $u^\# \in \absconfs$ we have $u \in \gamma_\bool(u^\#)$ iff 
(i)
$u^\#.\objects = u.\objects$; 
(ii) $u^\#.\This = u.\This$; 
(iii) $u^\#.\curq = u.\curq$; 
(iv) $u^\#.\nu = u.\nu$; 
(v) for all $o \in u.\objects$ and $p \in \unarpreds$, $u^\#.\pv(o,p) = \booltrue$ iff $u \models p(o)$; and 
(vi) for all objects $o \in \objects$, and $a \in \funcpreds(o)$, $u.\store(o,a) = u^\#.\absstore(o,a)$. 
We lift $\gamma_\bool$ pointwise to a function $\gamma_\bool : \absdom_\heap \to D$ by defining $\gamma_\bool(U^\#) = \bigcup \pset{\gamma_\bool(u^\#)}{u^\# \in U^\#}$. 
Clearly, $\gamma_\bool$ is monotone. 
It is also easy to see that $\gamma_\bool$ distributes over meets because for each configuration $u$ 
there is, up to isomorphism, a unique abstract configuration $u^\#$ such that $u \in \gamma_\bool(u^\#)$. 
Hence, let $\alpha_\bool: D \to \absdom_\heap$ be the unique function such that $(\alpha_\bool,\gamma_\bool)$ forms a Galois connection 
between $D$ and $\absdom_\heap$, i.e., $\alpha_\bool(U) = \bigcap \pset{U^\#}{U \subseteq \gamma_\bool(U^\#)}$.

The abstract transition system $S_\heap^\#= (\absconfs,\absinit,\rightarrow^\#_\heap)$ is
obtained by 
setting $\init^\# = \alpha_\bool(\init)$ and 
defining $\rightarrow^\#_\heap\; \subseteq \absconfs \times \absconfs$ as follows. 
Let $u^\#,v^\# \in \absconfs$.
We have $u^\# \rightarrow^\#_\heap v^\#$ iff $v^\# \in \alpha_\bool \circ \post.S \circ \gamma_\bool (u^\#)$. 

\begin{theorem}
The system $S_\heap^\#$ simulates the concrete system $S$, i.e., 
(i) $\init \subseteq \gamma_\heap(\absinit)$ and 
(ii) 
for all $u,v\in\conconfs$ and $u^\# \in\absconfs$,
if $u\in\gamma_\heap(u^\#)$ and $u\rightarrow v$,
then there exists $v^\# \in\absconfs$ such that $u^\#\rightarrow^\#_\heap v^\#$ and $v\in\gamma_\heap(v^\#)$.
\end{theorem}

\begin{proof}
\emph{(Sketch)}
We can use the framework of abstract interpretation~\cite{CousotCousot77AbstractInterpretation} to prove the theorem.
By definition, $(\alpha_\bool,\gamma_\bool)$ forms a Galois connection between $D$ and $\absdom_\heap$.
Furthermore, $u^\# \rightarrow^\#_\heap v^\#$ iff $v^\# \in \alpha_\bool \circ \post.S \circ \gamma_\bool (u^\#)$. 
\end{proof}

\noindent\emph{Depth-Boundedness.}
Let $u^\# \in \absconfs$ be an abstract configuration.
A \emph{simple path} of length $n$ in $u^\#$ is a sequence of distinct objects $\pi=o_1,\dots,o_n$ 
in $u^\#.\objects$ such that for all $1 \leq i < n$, 
there exists $a_i$ with $u^\#.\absstore(o_i,a_i)=o_{i+1}$ or $u^\#.\absstore(o_{i+1},a_i)=o_i$ (the path is not directed). 
We denote by $\lsp(u^\#)$ the length of the longest simple path of $u^\#$.
We say that a set of abstract configurations $U^\# \subseteq \absconfs$  is \emph{depth-bounded} 
if $U^\#$ is bounded in the length of its simple paths, i.e., there exists $k \in \NN$ such that $\forall u^\# \in U^\#, \lsp(u^\#) \leq k$ and the size of the stack $|u^\#.\nu| \leq k$.

We show that under certain restrictions on the binary abstraction predicates $\funcpreds$, 
the abstract transition system $S_\heap^\#$ is a well-structured transition system. 
For this purpose, we define the \emph{embedding order} on abstract configurations.
An \emph{embedding} for two configurations $u^\#, v^\#: \absconfs$ is a 
function $h: u^\#.\objects \to v^\#.\objects$ such that the following conditions hold: 
(i) $h$ preserves the class of objects: for all $o \in u^\#.\objects$, $\class(o)=\class(h(o))$;
(ii) $h$ preserves the current object, $h(u^\#.\This)=v^\#.\This$;
(iii) $h$ preserves the stack, $\bar{h}(u^\#.\nu)=v^\#.\nu$ where  $\bar{h}$ is the unique extension of $h$ to stacks;
(iv) $h$ preserves the predicate valuation: 
for all $o \in u^\#.\objects$ and $p \in \unarpreds$, $u^\#.\pv(o,p)$ iff $v^\#.\pv(h(o),p)$; and 
(v) $h$ preserves the abstract store, i.e., for all $o \in u^\#.\objects$ and $a \in \funcpreds(o)$, 
we have $h(u^\#.\store^\#(o,a)) = v^\#.\store^\#(h(o),a)$. 
The embedding order $\preceq: \absconfs \times \absconfs$ is then as follows: 
for all $u^\#, v^\#: \absconfs$, $u^\# \preceq v^\#$ iff $u^\#$ and $v^\#$ share the same current control 
location ($u^\#.\curq = v^\#.\curq$) and there exists an injective embedding of $u^\#$ into $v^\#$.

\begin{lemma}
(1) The embedding order is monotonic with respect to abstract transitions in
$S^\#_\heap=(\absconfs,\absinit,\rightarrow^\#_\heap)$\techreport{, 
i.e., for all $u_1^\#,u_2^\#,v_1^\# \in \absconfs$, if 
$u_1^\# \preceq u_2^\#$ and $u_1^\# \rightarrow^\#_\heap v_1^\#$, then there exists $v_2^\# \in \absconfs$ such that $v_1^\# \preceq v_2^\#$ and $u_2^\# \rightarrow^\#_\heap v_2^\#$}{}.
(2)  Let $U^\#$ be a depth-bounded set of abstract configurations. Then $(U^\#,\preceq)$ is a wqo.
\label{lem-dpi-monotone}
\end{lemma}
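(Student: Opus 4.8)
The plan is to lift the abstract transition to a witnessing concrete transition and transport that along the embedding. Since $\gamma_\bool$ distributes over unions and $\alpha_\bool$ is its left adjoint, $u^\# \rightarrow^\#_\heap v^\#$ holds exactly when there are concrete configurations $u \in \gamma_\bool(u^\#)$ and $v$ with $u \rightarrow v$ and $v \in \gamma_\bool(v^\#)$. So I fix such a witness $u_1 \rightarrow v_1$ with $u_1 \in \gamma_\bool(u_1^\#)$ and $v_1 \in \gamma_\bool(v_1^\#)$, realized by a CFA edge $(\curq,\op,\curq')$, together with the embedding $h\colon u_1^\# \hookrightarrow u_2^\#$. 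Restricting attention to abstract configurations with non-empty concretization, I build a concrete $u_2 \in \gamma_\bool(u_2^\#)$ that agrees with $u_1$ along $h$ on the image $h(u_1^\#.\objects)$ — this is consistent because $h$ preserves classes, the predicate valuation, and the abstract store — and that uses an arbitrary realization of $u_2^\#$ on the remaining objects. Because $h$ preserves $\curq$ and the current object, and because every operation $\op$ modifies only the current object, the objects it reaches by one field dereference, and (for $\idnew$) a fresh object — all of which lie in or are added to the image of $h$ — the edge $(\curq,\op,\curq')$ is enabled from $u_2$ and produces a $v_2$ agreeing with $v_1$ along $h$ (extended to the possible fresh node) and with $u_2$ elsewhere. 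Taking $v_2^\# = \alpha_\bool(\{v_2\})$ gives $u_2^\# \rightarrow^\#_\heap v_2^\#$, and extending $h$ by sending the fresh node of $v_1^\#$, if any, to that of $v_2^\#$ is an embedding $v_1^\# \hookrightarrow v_2^\#$: equality of control locations holds since both sides move to $\curq'$, and the stack condition survives the push of a method call or constructor and the pop of a return because $h$ already matched the corresponding frames. The verification is a routine case split over the five operation forms plus the return rule, in the style of \cite{DBLP:conf/fossacs/WiesZH10}.

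\textbf{Part (2): wqo.} The plan is to reduce $(U^\#,\preceq)$ to the well-quasi-ordering of bounded-depth labelled graphs. Stripped of its finite decorations, an abstract configuration is a finite directed, edge-labelled graph whose nodes are objects, whose edges are the $\funcpreds$-references, and whose nodes carry a label from the finite set consisting of the object's class and its valuation on $\unarpreds$; the order $\preceq$ is precisely label-preserving subgraph embedding, subject to the extra requirements that it fixes the control location $\curq$, maps $\This$ to $\This$, and matches the stack $\nu$ frame-by-frame. Since $F.Q$ and the set of variables are finite and $|u^\#.\nu| \le k$ on $U^\#$, there are only finitely many possibilities for the object-free shape of the pair consisting of $\curq$ and the variable/location components of $\nu$; I partition $U^\#$ by this shape, so that by the pigeonhole principle it suffices to treat each infinite class separately. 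Within a class $\curq$ and the stack shape are fixed, and the remaining structure is the labelled graph together with the identities of $\This$ and of the objects filling the $\le k$ stack slots; I fold these into the node labels by recording, for each node, which of these $\le k+1$ distinguished positions it occupies, which forces the frame-wise consistency required of the embedding. What remains is the statement that finitely-labelled directed edge-labelled graphs whose longest undirected simple path has at most $k$ vertices are well-quasi-ordered by label-preserving subgraph embedding; this is exactly the structural ingredient underlying the theory of depth-bounded systems \cite{Meyer08OnBoundednessInDepth,DBLP:conf/fossacs/WiesZH10} (bounded longest path forces bounded tree-depth, and bounded-tree-depth labelled graphs are wqo by a Kruskal-type argument on their tree-depth decompositions). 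Recombining the finitely many classes gives that $\preceq$ is a wqo on $U^\#$.

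\textbf{The main obstacle.} In Part (1) the one delicate point is the realizability caveat: an abstract configuration lying above a realizable one in $\preceq$ need not itself have a non-empty concretization, so monotonicity must be read either over the sub-system of realizable configurations — which is all the analysis ever touches, and which is closed downward under $\preceq$ — or over the abstract domain restricted accordingly; this is easy to overlook. In Part (2) the genuine content is entirely imported, so the work is in making the reduction faithful: encoding the control location and the bounded stack as finite vertex labels so that graph embedding really coincides with $\preceq$, and using that ``depth-bounded'' bounds the stack length in addition to the longest simple path.
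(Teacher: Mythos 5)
Your proof is correct and follows essentially the same route as the paper's, which disposes of part (1) with ``follows from the definitions'' and of part (2) by reducing to the known wqo result for depth-bounded graphs after encoding the bounded stack into the graph. The only substantive difference is in the encoding: the paper appends the stack to the object graph as a chain with distinguished bottom and top nodes, whereas you partition by stack shape and fold the $\le k+1$ distinguished positions ($\This$ and the stack slots) into finite node labels --- both reductions rest on the same stack-boundedness assumption and import the same wqo theorem for bounded-simple-path labelled graphs. Your realizability caveat in part (1) (that $u_2^\#$ may have empty concretization, so that no $\rightarrow^\#_\heap$-successor exists) is a genuine subtlety the paper passes over silently; reading monotonicity over the reachable, hence realizable, configurations as you propose is the right resolution.
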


\begin{proof}
The first part follows form the definitions.
For the second part, we can reduce it to the result from~\cite{StructuralCounterAbs}.
We just need to encode the stack into the graph.
The stack itself can be easily encoded as a chain with special bottom and top node.
The assumption that the stack is bounded guarantees that can still apply \cite[Lemma~2]{StructuralCounterAbs}.
\end{proof}

If the set of reachable configurations of the abstract transition system $S_\heap^\#$ is depth-bounded, 
then $S_\heap^\#$ induces a well-structured transition system.

\begin{theorem}
If $\Reach(S_\heap^\#)$ is depth-bounded, then $(\Reach(S^\#), \absinit, \rightarrow^\#_\heap, \preceq)$ is a WSTS.
\end{theorem}

\begin{proof}
The theorem follows from Lemma~\ref{lem-dpi-monotone} and \cite[Theorem 2]{Meyer08OnBoundednessInDepth}.
\end{proof}

In practice, we can ensure depth-boundedness of $\Reach(S_\heap^\#)$ syntactically by choosing the set of binary 
abstraction predicates $\funcpreds$ such that it does not contain reference fields that span recursive 
data structures. 
Such reference fields are only allowed to be used in the defining formulas of the unary abstraction predicates.
Recursive data structures can be dealt with only if they are private to the package, i.e. not exposed to the user.
In that case the predicate abstraction can use a more complex domain that understand such shapes, e.g. \cite{Sagiv02:Shape}.
In the next section, we assume that the set $\Reach(S_\heap^\#)$ is depth-bounded and we identify $S_\heap^\#$ with its induced WSTS. 

\begin{examp}

Figure~\ref{fig:iter_dabs} depicts the two corresponding, depth-bounded abstract configurations of the concrete configurations in Figure~\ref{fig:iter_conc}.
The objects are labelled with their corresponding unary predicates. 
A labelled arrow between two objects specifies that the corresponding binary predicate between two object holds.
The set of unary abstraction predicates consists of:
\begin{equation*}
\begin{array}{lclclcl}
\mathit{empty(x)} & \equiv & x.\mathtt{size} = 0 & \quad &
\mathit{synch(x)} & \equiv & x.\mathtt{iver}=x.\mathtt{iter\_of.sver} \\
\mathit{mover(x)} & \equiv & x.\mathtt{pos} <x.\mathtt{iter\_of.size} &  \quad &
\mathit{positive(x)} & \equiv & x.\mathtt{e}>0
\end{array}
\end{equation*}
\noindent The set of binary abstraction predicates is $\funcpreds = \set{\mathtt{iter\_of}}$.
If we had also included $\mathtt{head}$ and $\mathtt{next}$ in $\funcpreds$, the resulting abstraction would not have been depth bounded.

\begin{figure}[t]%
\centering
\input{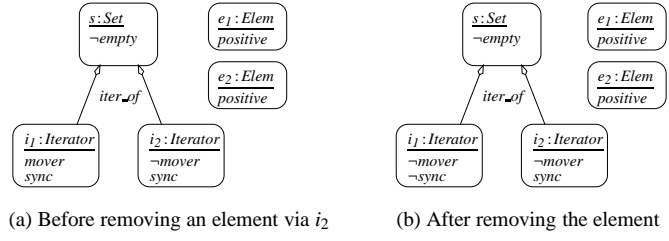}
\caption{Two depth-bounded abstract configurations}
\label{fig:iter_dabs}
\end{figure}

\end{examp}

\subsection{Ideal Abstraction}
\label{sec:analysis}
In our model, the errors are local to objects.
Thus, we are looking at the control-state reachability question.
This means that the set of abstract error configurations is upward-closed with respect to the embedding order $\preceq$, 
i.e., we have $\abserror = \uc\abserror$. 
From the monotonicity of $\preceq$ we therefore conclude that $\Reach(S_\heap^\#) \cap \abserror = \emptyset$ iff $\Cover(S_\heap^\#) \cap \abserror = \emptyset$. 
This means that if we analyze the abstract transition system $S_\heap^\#$ modulo downward closure of abstract configurations, this 
does not incur an additional loss of precision. 
We exploit this observation as well as the fact that $S_\heap^\#$ is well-structured to 
construct a finite abstract transition system whose configurations are given by downward-closed 
sets of abstract configurations. 
We then show that this abstract transition system can be effectively computed. 

Every downward-closed subset of a wqo is a \emph{finite} union of ideals. 
In previous work~\cite{Zufferey12:Ideal}, we formalized an abstract interpretation coined \emph{ideal abstraction}, which exploits this observation to obtain a generic terminating analysis for computing an over-approximation of the covering set of a WSTS. We next show that ideal abstraction applies to the depth-bounded abstract semantics by providing an appropriate finite representation of ideals and how to use it to compute the DPI.
The abstract domain $\absdom_\ideal$ of the ideal abstraction is given by downward-closed sets of abstract configurations, 
which we represent as finite sets of ideals. 
The concrete domain is $\absdom_\heap$. 
The ordering on the abstract domain is subset inclusion. 
The abstraction function is downward closure.

Formally, we denote by $\Idl(\absconfs)$ the set of all depth-bounded ideals of abstract configurations with respect to the embedding order.
Define the quasi-ordering $\sqsubseteq$ on $\powerset_\fin(\Idl(\absconfs))$ as the point-wise extension of $\subseteq$ from the 
ideal completion $\Idl(\absconfs)$ of $\absconfs(\preceq)$ to $\powerset_\fin(\Idl(\absconfs))$:
\[
\mathcal{I}_1 \sqsubseteq \mathcal{I}_2 \iff \forall I_1 \in \mathcal{I}_1.\, \exists I_2 \in \mathcal{I}_2.\, I_1 \subseteq I_2
\]
The abstract domain $\absdom_\ideal$ is the quotient of $\powerset_\fin(\Idl(\absconfs))$
with respect to the equivalence relation $\sqsubseteq \cap
\sqsubseteq^{-1}$. For notational convenience we use the same symbol
$\sqsubseteq$ for the quasi-ordering on $\powerset_\fin(\Idl(\absconfs))$ and
the partial ordering that it induces on
$\absdom_\ideal$. We further identify the elements of
$\absdom_\ideal$ with the finite sets of maximal ideals, i.e., for
all $L \in \absdom_\ideal$ and $I_1,I_2 \in L$, if $I_1 \subseteq
I_2$ then $I_1 = I_2$.
The abstract domain $\absdom_\ideal$ is defined as $\powerset_\fin(\Idl(\absconfs))$.
The concretization function $\gamma_\ideal : \absdom_\ideal \to \absdom_\heap$ is $\gamma_\ideal(\mathcal{I}) = \bigcup \mathcal{I}$.  
Further, define the abstraction function $\alpha_\ideal: \absdom_\heap \to \absdom_\ideal$ as 
$\alpha_\ideal(U^\#) = \pset{I \in \Idl(\absconfs)}{I \subseteq \dc U^\#}$. 
From the ideal abstraction framework~\cite{Zufferey12:Ideal}, it follows that $(\alpha_\ideal,\gamma_\ideal)$ forms a 
Galois connection between $\absdom_\heap$ and $\absdom_\ideal$. 
The overall abstraction is then given by the Galois connection $(\alpha,\gamma)$ between $D$ and $\absdom_\ideal$, 
which is defined by $\alpha = \alpha_\ideal \circ \alpha_\heap$ and $\gamma = \gamma_\heap \circ \gamma_\ideal$. 
We define the \emph{abstract post operator} $\abspost$ of $S$ as the most precise abstraction of $\post.S$ with respect to this 
Galois connection, i.e., $\abspost.S = \alpha \circ \post.S \circ \gamma$.

In the following, we assume the existence of a \emph{sequence widening operator} $\idealwiden: \Idl(\absconfs)^+ \pto \Idl(\absconfs)$, 
i.e., $\idealwiden$ satisfies the following two conditions: 
(i) \emph{covering condition}: for all $\mathcal{I} \in \Idl(\absconfs)^+$, if $\idealwiden(\mathcal{I})$ is defined, 
then for all $I$ in $\mathcal{I}$, $I \subseteq \idealwiden(\mathcal{I})$.; and 
(ii) \emph{termination condition}: for every ascending chain $(I_i)_{i \in \NN}$ in $\Idl(\absconfs)$, 
the sequence $J_0 = I_0$, $J_{i} = \idealwiden(I_0\dots I_i)$, for all $i > 0$, is well-defined and an ascending stabilizing chain. 
\techreport{We will define the operator $\idealwiden$ later in this section.}{}

The ideal abstraction induces a finite labeled transition system $S_\ideal^\#$ whose configurations are ideals of abstract configurations. 
There are special transitions labeled with $\epsilon$, which we refer to as \emph{covering transitions}. 
We call $S_\ideal^\#$ the \emph{abstract covering system} of $S_\heap^\#$. 
This is because the set of reachable configurations of $S_\ideal^\#$ over-approximates the covering set of $S_\heap^\#$, 
i.e., $\Cover(S_\heap^\#) \subseteq \gamma_\ideal(\Reach(S_\ideal^\#))$. 
Furthermore, the directed graph spanned by the non-covering transitions of $S_\ideal^\#$ is acyclic.

Formally, we define $S_\ideal^\#=(\idealconfs,\idealinit,\idealtransrel{\cdot})$ as follows.
The initial configurations $\idealinit$ are given by $\idealinit = \alpha_\ideal(\absinit)$. 
The set of configurations $\idealconfs \subseteq \Idl(\absconfs)$ and 
the transition relation $\idealtransrel{\cdot} \subseteq \idealconfs \times \idealconfs$ 
are defined as the smallest sets satisfying the following conditions: 
(1) $\idealinit \subseteq \idealconfs$; and 
(2) for every $I \in \idealconfs$, let $\mathit{paths}(I)$ be the set of all sequences of ideals $I_0 \dots I_n$ with $n \geq 0$ such that $I_0 \in \idealinit$, 
$I_n = I$, and for all $0 \leq i < n$, $I_i \idealtransrel{\cdot} I_{i+1}$. 
Then, for every path $\mathcal{I}=I_0\dots I_n \in \mathit{paths}(I)$, if 
there exists $i < n$ such that $I \subseteq I_i$, then $I \idealtransrel{\epsilon} I_i$. 
Otherwise, for all 
$I' \in \abspost.S \circ \gamma_\ideal (I)$, 
let $J' = \idealwiden(\mathcal{I}' I')$ where $\mathcal{I'}$ is the subsequence of all 
ideals $I_i$ in $\mathcal{I}$ with $I_i \subseteq I'$, then $J' \in \idealconfs$ and $I \idealtransrel{\cdot} J'$.

\begin{theorem}
  The abstract covering system $S_\ideal^\#$ is computable and finite.
\end{theorem}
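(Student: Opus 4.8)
\emph{Proof plan.} The plan is to instantiate the generic ideal-abstraction machinery of~\cite{Zufferey12:Ideal} for the depth-bounded abstract semantics $S_\heap^\#$. By the theorem above establishing that $(\Reach(S^\#_\heap),\absinit,\rightarrow^\#_\heap,\preceq)$ is a WSTS (under the standing depth-boundedness assumption), the general results of~\cite{Zufferey12:Ideal} apply once we exhibit: (i) an effective finite representation of the ideals in $\Idl(\absconfs)$ with decidable inclusion; (ii) an algorithm that, given a single ideal $I$, computes $\abspost.S \circ \gamma_\ideal(I)$ as a finite set of ideals; and (iii) a computable sequence widening $\idealwiden$ satisfying the covering and termination conditions. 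Given (i)--(iii), finiteness and computability of $S_\ideal^\#$ fall out of the inductive definition of $\idealconfs$ and $\idealtransrel{\cdot}$.

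First I would record the finite representation. Every depth-bounded ideal of abstract configurations with respect to $\preceq$ is exactly the downward closure of the unfoldings of a nested object graph (a ``limit'' in the sense of~\cite{DBLP:conf/fossacs/WiesZH10,Zufferey12:Ideal}), where the nesting-level function marks which subgraphs may be replicated without bound; this is the domain $\mathcal{H}_P$ of Section~\ref{sect:absgraph}, augmented with the stack encoded as a bounded chain, exactly as in the proof of Lemma~\ref{lem-dpi-monotone} via~\cite{StructuralCounterAbs}. Inclusion $I_1 \subseteq I_2$ between two such representations reduces to the existence of a nesting-respecting graph embedding, which is decidable; hence $\sqsubseteq$ on $\absdom_\ideal$ is decidable, and in particular the ancestor test $I \subseteq I_i$ used to install the covering transitions $I \idealtransrel{\epsilon} I_i$ is effective.

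Next I would argue that $\abspost.S$ is computable on one ideal and returns only finitely many ideals. The key observation is that any concrete transition of $S$ (Section~\ref{sect:concsem}) touches only a bounded number of objects --- the receiver, the arguments (bounded by $\sig(m)$), the objects in scope ($\nscope{i}$), and any newly created objects ($\nnew{i}$) --- so it suffices to unfold a given nested object graph a bounded number of times to expose all representatives that can participate, apply the concrete transition rules to the exposed object instances, and re-fold the result into ideal normal form by ideal closure; since there are finitely many classes, public methods, and choices of representative nodes up to isomorphism, this yields a finite, computable set of ideals. Soundness of using only boundedly many unfoldings rests on monotonicity of $\preceq$ (Lemma~\ref{lem-dpi-monotone}): enlarging the unfolded part cannot produce genuinely new successors modulo downward closure. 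Together with a computable $\idealwiden$ (defined later in this section, where the covering and termination conditions are discharged), every individual step in the construction of $S_\ideal^\#$ is effective.

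Finally I would assemble finiteness. The non-$\epsilon$ transitions of $S_\ideal^\#$ span a tree rooted at the finite set $\idealinit$, finitely branching by the previous paragraph. Along any branch $\mathcal{I}=I_0 I_1 \dots$, the widened ideals form an ascending chain --- at each step $J' = \idealwiden(\mathcal{I}' I')$ covers $I'$ by the covering condition, and restricting to the subsequence $\mathcal{I}'$ keeps the chain monotone --- which stabilizes by the termination condition of $\idealwiden$; as soon as it stabilizes, the fresh ideal is contained in one of its ancestors, so the construction emits a covering transition rather than extending the branch. Hence every branch is finite, and by K\"onig's lemma the whole tree, and therefore $\idealconfs$ and $\idealtransrel{\cdot}$, is finite; computability follows since each node and edge is produced by the effective procedures of items (i)--(iii). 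I expect the main obstacle to be item (ii): precisely bounding the number of unfoldings needed to capture all behaviour of a transition on an ideal, and showing that re-folding into a canonical finite set of maximal ideals terminates.
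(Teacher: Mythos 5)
Your proposal is correct and follows essentially the same route as the paper: the paper's proof is a three-line sketch that delegates to the ideal-abstraction framework of~\cite{Zufferey12:Ideal} for termination, to~\cite[Lemma~15]{DBLP:conf/fossacs/WiesZH10} for the finite representation of ideals, and to the remainder of the section for the effective computation of $\idealtransrel{\cdot}$ (ideal configurations with decidable inclusion, the symbolic post with $\reduce$, and the widening). Your write-up simply fleshes out those same three ingredients and the standard chain-stabilization/K\"onig argument that the cited framework supplies.
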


\begin{proof}
\emph{(Sketch)}~
Following the result from \cite{Zufferey12:Ideal}, we can effectively compute an inductive overapproximation $\mathcal{C}$ of the covering set of $S_\ideal^\#$.
From \cite[Lemma~15]{DBLP:conf/fossacs/WiesZH10}, we have a finite representation of $\mathcal{C}$.
Finally, $\idealtransrel{\cdot}$ can be effectively computed as we will see in the remainder of the section.
\end{proof}

Define the relation $\idealtransrel{*}\; \subseteq \idealconfs \times \idealconfs$ as 
$\idealtransrel{*} \; = \; \idealtransrel{\cdot} \cup \idealtransrel{\epsilon} \circ \idealtransrel{\cdot}$. 
We now state our main soundness theorem. 
\begin{theorem} {\bf [Soundness]}
  The abstract covering system $S_\ideal^\#$ simulates $S$, i.e., (i) $\init \subseteq \gamma(\idealinit)$ and 
(ii) for all $I \in \idealconfs$ and $u,v \in \Reach(S)$, 
if $u \in \gamma(I)$ and $u \rightarrow v$, then there exists $J \in \idealconfs$ such that $v \in \gamma(J)$ and $I \idealtransrel{*} J$.
\end{theorem}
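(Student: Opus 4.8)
The plan is to obtain the statement by composing two simulations. The first is already in hand: the simulation theorem of Section~\ref{sec:abstractsem} states that $S_\heap^\#$ simulates $S$ through the Galois connection $(\alpha_\heap,\gamma_\heap)$. The second — the new ingredient — is that the abstract covering system $S_\ideal^\#$ simulates $S_\heap^\#$; this is an instance of the general soundness of ideal abstraction \cite{Zufferey12:Ideal} specialized to the present construction, with the one caveat that a single $\rightarrow^\#_\heap$-step may be matched in $S_\ideal^\#$ either by a plain transition $\idealtransrel{\cdot}$ or by a covering transition $\idealtransrel{\epsilon}$ followed by a plain transition, which is exactly the behaviour recorded by $\idealtransrel{*}$. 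Part (i) is then immediate: $\idealinit = \alpha_\ideal(\absinit)$ and $\absinit = \alpha_\heap(\init)$, so the first simulation gives $\init \subseteq \gamma_\heap(\absinit)$, the Galois connection $(\alpha_\ideal,\gamma_\ideal)$ gives $\absinit \subseteq \gamma_\ideal(\alpha_\ideal(\absinit)) = \gamma_\ideal(\idealinit)$, and applying the monotone $\gamma_\heap$ together with $\gamma = \gamma_\heap\circ\gamma_\ideal$ yields $\init \subseteq \gamma_\heap(\gamma_\ideal(\idealinit)) = \gamma(\idealinit)$.

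For part (ii), fix $I \in \idealconfs$ and $u,v \in \Reach(S)$ with $u \in \gamma(I)$ and $u\rightarrow v$. Since $\gamma_\heap$ distributes over meets there is a unique abstract configuration $u^\#$ with $u \in \gamma_\heap(u^\#)$, and $u \in \gamma(I) = \gamma_\heap(\gamma_\ideal(I))$ forces $u^\# \in \gamma_\ideal(I)$; let $v^\#$ likewise be the unique abstract configuration with $v \in \gamma_\heap(v^\#)$. Applying the first simulation to $u \in \gamma_\heap(u^\#)$ and $u\rightarrow v$ produces $u^\# \rightarrow^\#_\heap v^\#$. It therefore suffices to exhibit $J \in \idealconfs$ with $v^\# \in \gamma_\ideal(J)$ and $I \idealtransrel{*} J$, for then $v \in \gamma_\heap(v^\#) \subseteq \gamma_\heap(\gamma_\ideal(J)) = \gamma(J)$.

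Since $I \in \idealconfs$, the construction of $S_\ideal^\#$ unfolds a path $\mathcal{I} = I_0\dots I_n = I$ in $\mathit{paths}(I)$, and I case-split on how $I$ is treated along $\mathcal{I}$. If $I$ is expanded (no ancestor on $\mathcal{I}$ subsumes it), then, since $\abspost.S$ is the most precise abstraction of $\post.S$ and $v \in \post.S(\gamma_\heap(u^\#)) \subseteq \post.S(\gamma(I))$, the configuration $v^\#$ belongs to some ideal $I' \in \abspost.S \circ \gamma_\ideal(I)$; taking $J = \idealwiden(\mathcal{I}'I')$ with $\mathcal{I}'$ the designated subsequence, the covering condition on $\idealwiden$ gives $I' \subseteq J$, hence $v^\# \in \gamma_\ideal(J)$, and by construction $I \idealtransrel{\cdot} J$, so $I \idealtransrel{*} J$. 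If instead $I$ is covered, there is $i < n$ with $I \subseteq I_i$ and $I \idealtransrel{\epsilon} I_i$; then $u^\# \in \gamma_\ideal(I) \subseteq \gamma_\ideal(I_i)$, and $I_i$ is not a covered leaf (it has the $\idealtransrel{\cdot}$-successor $I_{i+1}$), hence was expanded, so the previous argument applied at $I_i$ yields $J$ with $v^\# \in \gamma_\ideal(J)$ and $I_i \idealtransrel{\cdot} J$; composing, $I \idealtransrel{\epsilon} I_i \idealtransrel{\cdot} J$, i.e.\ $I \idealtransrel{*} J$. This closes both cases.

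The main obstacle is the bookkeeping forced by the covering transitions and the widening. One must argue that the target $v^\#$ survives the replacement of the post-ideal $I'$ by its widening $J = \idealwiden(\mathcal{I}'I')$ — which is precisely the covering condition of $\idealwiden$ — and that a node reached through an $\idealtransrel{\epsilon}$-edge, being an internal node of the construction rather than a leaf, indeed carries the full family of expansion successors needed to match the step; this in turn relies on restricting $u,v$ to $\Reach(S)$ so that, via the first simulation, the matching abstract configurations are reachable in $S_\heap^\#$ and the construction of $S_\ideal^\#$ genuinely unfolds a path to $I$. The monotonicity of $\preceq$ from Lemma~\ref{lem-dpi-monotone}, together with the general soundness of ideal abstraction \cite{Zufferey12:Ideal}, is what ensures that $\abspost.S$ over-approximates all $\rightarrow^\#_\heap$-successors of configurations in $\gamma_\ideal(I)$, so that the ideal $I'$ used above always exists.
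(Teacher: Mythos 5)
Your proof is correct and follows essentially the same route as the paper's (much briefer) sketch: lift the heap-abstraction simulation to the ideal level, and use the construction of $S_\ideal^\#$ together with the covering condition of $\idealwiden$ to ensure the target ideal contains the matching abstract successor, with the covered/expanded case split accounting for the $\epsilon$-transitions absorbed into $\idealtransrel{*}$. The paper leaves all of this implicit, so your write-up is simply a faithful elaboration of its argument rather than a different approach.
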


\begin{proof}
\emph{(Sketch)}~~
The abstract covering system is just a lifting of the original transition system to a finite-state system by partitioning the states into a finite number of sets given by the incomparable ideals in covering set or an overapproximation of it.
The lifting relies on the monotonicity property of the underlying WSTS to ensures simulation.
The transition relation $\idealtransrel{\cdot}$ maps states from ideal to ideal while ensuring that the target ideal contains at least one larger state.
\end{proof}

In the rest of this section we explain how we represent ideals of abstract configurations and how the operations for computing the abstract covering system are implemented.

\paragraph{Representing Ideals of Abstract Configurations.}

The ideals of depth-bounded abstract configurations are recognizable by regular hedge automata~\cite{DBLP:conf/fossacs/WiesZH10}. We can encode these automata into abstract configurations $\idealconf$ that are equipped with a \emph{nesting level function}. The nesting level function indicates how the substructures of the abstract store of $\idealconf$ can be replicated to obtain all abstract configurations in the represented ideal. 

Formally, a \emph{quasi-ideal configuration} $\idealconf$ is a tuple $(\objects, \This, \curq, \nu, \pv, \absstore, \nlf)$ where $\nlf: \objects \to \NN$ is the nesting level function and $(\objects, \This, \curq, \nu, \pv, \absstore)$ is an abstract configuration, except that $\pv$ is only a partial function $\pv: \objects \times \unarpreds \pto \BB$. We denote by $\allqidealconfs$ the set of all quasi-ideal configurations.
We call $\idealconf=(\objects, \This, \curq, \nu, \pv, \absstore, \nlf)$ simply \emph{ideal configuration}, if $\pv$ is total and for all $o \in \objects$, $a \in \funcpreds(o)$, $\nlf(o) \geq \nlf(\absstore(o,a))$.  We denote by $\absify{\idealconf}$ the \emph{inherent} abstract configuration $(\objects, \This, \curq, \nu, \pv, \absstore)$ of an ideal configuration $\idealconf$. Further, we denote by $\allidealconfs$ the set of all ideal configurations and by $\allfinidealconfs$ the set of all ideal configurations in which all objects have nesting level 0. We call the latter \emph{finitary} ideal configurations.

\paragraph{Meaning of Quasi-Ideal Configurations.}

An \emph{inclusion mapping} between quasi-ideal configurations $\idealconf=(\objects,\This,\curq,\nu,\absstore,\nlf)$ and $\oidealconf=(\objects',\This',\curq',\nu',{\absstore}', \nlf')$ is an embedding $h: \objects \to \objects'$ that satisfies the following additional conditions: (i) for all $o \in \objects$, $\nlf(o) \leq \nlf'(h(o))$; (ii) $h$ is injective with respect to level $0$ vertices in $\objects'$: for all $o_1,o_2 \in \objects$, $o' \in \objects'$, $h(o_1) = h(o_2) = o'$ and $\nlf'(o') = 0$ implies $o_1=o_2$; and (iii) for all distinct $o_1,o_2,o \in \objects$, if $h(o_1) = h(o_2)$, and $o_1$ and $o_2$ are both neighbors of $o$, then $\nlf'(h(o_1)) = \nlf'(h(o_2)) > \nlf'(h(o))$.

We write $\idealconf \leq_h \oidealconf$ if $\curq=\curq'$, and $h$ is an inclusion mapping between $\idealconf$ and $\oidealconf$. We say that $\idealconf$ is \emph{included} in $\oidealconf$, written $\idealconf \leq \oidealconf$, if $\idealconf \leq_h \oidealconf$ for some $h$.

We define the meaning $\den{\idealconf}$ of a quasi-ideal configuration $\idealconf$ as the set of all inherent abstract configurations of the finitary ideal configurations included in $\idealconf$:
\[
\den{\idealconf} = \pset{\absify{\oidealconf}}{\oidealconf \in \allfinidealconfs \land \oidealconf \leq \idealconf}
\]
We extend this function to sets of quasi-ideal configurations, as expected.

\begin{proposition}
  Ideal configurations exactly represent the depth-bounded   ideals of abstract configurations, i.e.,   $\pset{\den{\idealconf}}{\idealconf \in \allidealconfs} =   \Idl(\absconfs)$.
\end{proposition}

Since the relation $\leq$ is transitive, we also get:

\begin{proposition}
  For all $\idealconf,\oidealconf \in \allqidealconfs$, $\idealconf \leq \oidealconf$ iff $\den{\idealconf} \subseteq \den{\oidealconf}$.
\end{proposition}

It follows that inclusion of (quasi-)ideal configurations can be decided by checking for the existence of inclusion mappings, which is an NP-complete problem.

\techreport{
\paragraph{Folding and Unfolding of Quasi-Ideal Configurations.}

As we shall see, quasi-ideal configurations will be useful as an intermediate representation of the images of the abstract post operator. They can be thought of as a more compact representation of finite sets of ideal configurations. In order to be able to reduce quasi-ideal configurations to ideal configurations, we introduce several folding and unfolding operations.

First, we define a \emph{folded form} of quasi-ideal configurations. Let $\idealconf=(\objects,\This,\curq,\nu,\pv,\absstore,\nlf)$ be a quasi-ideal configuration. Intuitively, folding removes all objects from $\idealconf$ that have no effect on the denotation of $\idealconf$. Formally, we say that $\idealconf$ is in folded form if all inclusion mappings $h$ between $\idealconf$ and itself have the following property: for all distinct $o_1,o_2 \in \objects$, if $h(o_1)=h(o_2)$ then either $\nlf(o_1) < \nlf(o_2)$ and $\absstore(o,a)=o_1$ for some $o \in \objects$ with $\nlf(o) \leq \nlf(o_1)$, or conversly $\nlf(o_2) < \nlf(o_1)$ and $\absstore(o,a)=o_2$ for some $o \in \objects$ with $\nlf(o) \leq \nlf(o_2)$. Let $\fold$ be the function that transforms a quasi-ideal configuration into folded form.

Next, we introduce two operations on \emph{layers} of quasi-ideal configurations: \emph{copying} and \emph{unfolding} of a layer.
Let $\idealconf=(\objects,\This,\curq,\nu,\pv,\absstore,\nlf)$ and $\oidealconf=(\objects',\This',\curq',\nu',\pv',\absstore',\nlf')$ be quasi-ideal configurations.  For $i \geq 1$, denote all objects at nesting level $i$ or higher in $\idealconf$ by $\layer{\objects}{i} = \pset{ o \in \objects}{\nlf(o) \geq i}$. We call this set the \emph{$i$-th layer} of $\idealconf$.  Copying of the $i$-th layer involves duplicating the subgraph induced by $\layer{\objects}{i}$, unfolding in addition reduces the nesting level of all objects in the copy by $1$. 
Formally, assume $\oidealconf \leq_h \idealconf$ such that $h$ preserves the predicate valuation, i.e., for all $o' \in \objects'$, $\pv'(o')=\pv(h(o'))$. Further assume that there exists a partition $V,W_1,W_2$ of $\objects'$ such that $h$ is a bijection between $W_1$ and $\layer{\objects}{i}$, respectively, $W_2$ and $\layer{\objects}{i}$, respectively, $V$ and $\objects \setminus \layer{\objects}{i}$. If in addition, $h$ preserves the valuation of the nesting level function, i.e., for all $o \in \objects'$, $\nlf'(o')=\nlf(h(o'))$, then we say that $\oidealconf$ is obtained from $\idealconf$ by copying of the $i$-th layer under inclusion mapping $h$, written $\idealconf \copylayer{h,i} \oidealconf$. If on the other hand for all $o' \in V \cup W_1$, $\nlf'(o')=\nlf(h(o'))$ and for all 
$o' \in W_2$, $\nlf'(o')=\nlf(h(o'))-1$, then we say that $\oidealconf$ is obtained from $\idealconf$ by unfolding of the $i$-th layer under inclusion mapping $h$, written $\idealconf \unfoldlayer{h,i} \oidealconf$.

\paragraph{Reduction of Quasi-Ideal Configurations.}
We next describe how to reduce a quasi-ideal configuration to an equivalent set of ideal configurations.

Let $\idealconf=(\objects,\This,\curq,\nu,\pv,\absstore,\nlf)$ be a quasi-ideal configuration.  There are two reasons why $\idealconf$ may not be an ideal configuration: (1) for some object $o \in \objects$ and predicate $p \in \unarpreds$, $\pv(o,p)$ is undefined; and (2) for some object $o \in \objects$ and reference attribute $a \in \funcpreds(o)$, $\nlf(o) < \nlf(\absstore(o,a))$. Hence, we can recursively reduce $\idealconf$ into a finite set of ideal configuration $\mathcal{I}^\#$ as follows: define $\mathcal{I}^\#_0=\set{\idealconf}$ and for all $i \geq 0$, choose some $\oidealconf \in \mathcal{J}_0^\#$ that is not an ideal configuration, if such an element exists. Then obtain $\mathcal{I}_{n+1}^\#$ from $\mathcal{I}_n^\#$ by applying the following reduction steps:
\begin{enumerate}[(i)]
\item if there exists $o \in \objects$ and $a \in \funcpreds(o)$ with $\nlf(o) < \nlf(\absstore(o,a))$, choose one such $o$ and $a$. Let $i=\nlf(\absstore(o,a))$. Then replace $\oidealconf$ in $\mathcal{I}_{n}^\#$ with the set of all $\oidealconf_1 = (\objects_1,\This_1,\curq_1,\nu_1,\pv_1,\absstore_1,\nlf_1)$ such that $\oidealconf \unfoldlayer{h,i} \oidealconf_1$ and for the unique object $o' \in \objects_1$ with $h(o')=o$, $\nlf(\absstore'(o',a))=i-1$ holds.
\item else if there exists $o \in \objects$ with $\nlf(o) > 0$ and $p \in   \unarpreds$ such that $\pv(o,p)$ is undefined, choose one such $o$ and $p$. Then replace   $\oidealconf$ in $\mathcal{I}_{n}^\#$ with $\reduce(\oidealconf_2)$ where
$\oidealconf \copylayer{h,\nlf(o)} \oidealconf_1=(\objects_1,\This_1,\curq_1,\pv_1,\absstore_1,\nlf_1)$
with distinct $o_1,o_2 \in \objects_1$ such that $h(o_1)=h(o_2)=o$,
and $\oidealconf_2=(\objects_1,\This_1,\curq_1,\nu_1,\pv_1[(o_1,p) \mapsto \booltrue, (o_2,p) \mapsto \boolfalse], \absstore_1,\nlf_1)$;
\item else if there exists $o \in \objects$ with $\nlf(o)=0$ and $p \in \unarpreds$ such that $\pv(o,p)$ is undefined, choose one such $o$ and $p$. Then replace $\oidealconf$ in $\mathcal{I}_{n}^\#$ with $\oidealconf_1$ and $\oidealconf_2$, where $\oidealconf_1 = (\objects, \This, \curq,\nu, \pv[(o,p) \mapsto \booltrue],\absstore,\nlf)$ and $\oidealconf_2 = (\objects, \This, \curq,\nu, \pv[(o,p) \mapsto \boolfalse],\absstore,\nlf)$.
\end{enumerate}
If $\mathcal{I}_n^\#$ contains only ideal configurations, define $\mathcal{I}^\# = \mathcal{I}_n^\#$. The reduction sequence is finite. Let $\reduce:\allqidealconfs \to \powerset_\fin(\allidealconfs)$ be a function that performs the above reduction for some order of reduction steps. We extend $\reduce$ to finite sets of quasi-ideal configurations as follows: $\reduce(\mathcal{I}^\#) = \bigcup \pset{\reduce(\idealconf)}{\idealconf \in \mathcal{I}^\#}$.

\begin{proposition}
  Reduction does not change the meaning of quasi-ideal configurations, i.e., for all $\mathcal{I}^\# \subseteq_\fin \allqidealconfs$,   $\den{\mathcal{I}^\#} = \den{\reduce(\mathcal{I}^\#)}$.
\end{proposition}
}{
Quasi-ideal configurations are useful as an intermediate representation of the images of the abstract post operator. They can be thought of as a more compact representation of sets of ideal configurations. In fact, any quasi-ideal configuration can be reduced to an equivalent finite set of ideal configuration. We denote the function performing this reduction by $\reduce: \allqidealconfs \to \powerset_\fin(\allidealconfs)$ and we extend it to sets of quasi-ideal configurations, as expected. 
}

\begin{examp}

Figure~\ref{fig:iter_iabs} depicts the two corresponding, ideal abstract configurations of the two depth-bounded abstract configurations in Figure~\ref{fig:iter_dabs}.
The nesting level of each object is shown by the number 		next to it.
When the abstract configurations in Figure~\ref{fig:iter_dabs} are considered as finitary ideal configurations, then they are included in their corresponding ideal configurations in Figure~\ref{fig:iter_iabs}. 
The two inclusion mappings between the corresponding configurations in Figure~\ref{fig:iter_dabs} and Figure~\ref{fig:iter_iabs} are $\set{(i_1,i_1^\#),(i_2,i_2^\#),(s,s^\#),(e_1,e^\#),(e_2,e^\#)}$.

Note that since the nesting level of $\mathit{s^\#\!:\!Set}$ in both ideal configurations is zero, it is not possible to define inclusion mapping when there are more than one concrete set object. However, if the nesting levels of the set and iterator objects are incremented, then such an inclusion mapping can be defined.

\begin{figure}[t]%
\centering
\input{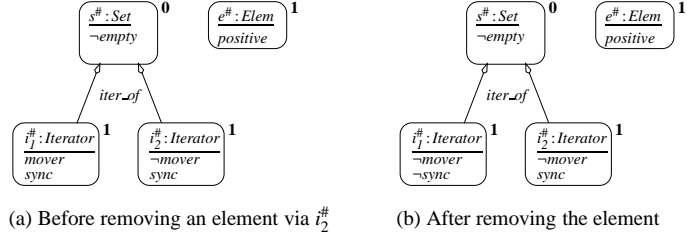}
\caption{Two ideal abstract configurations}
\label{fig:iter_iabs}
\end{figure}

\end{examp}

\paragraph{Computing the Abstract Post Operator.}

We next define an operator $\reppost.S$ that implements the abstract post operator $\abspost.S$ on ideal configurations. 
In the following, we fix an ideal configuration $\idealconf = (\objects,\This,\curq,\nu,\absstore,\nlf)$ and a transition $t=(\curq,\op,q')$ in $S$. 
For transitions not enabled at $\idealconf$, we set $\reppost.S.t(\idealconf)=\emptyset$.

We reduce the computation of abstract transitions $\absify{\idealconf} \rightarrow u^\#$ to reasoning about logical formulas. 
For efficiency reasons, we implicitly use an additional Cartesian abstraction~\cite{Ball03:Boolean} 
in the abstract post computation that reduces the number of required theorem prover calls.
For a set of variables $X$, we \techreport{define the}{assume a} 
\emph{symbolic weakest precondition} operator $\wlp: \Op.(\classes.A) \times \Pred.(X \cup \classes.A) \to \Pred.(X \cup \classes.A)$ \techreport{as follows. Let $\op \in \Op.(\classes.A)$ and $p \in \Pred.(X \cup \classes.A)$, then depending on the type of $\op$ we define: (i) $\wlp(\op,p)=p[e/\idthis]$, if $\op = \idthis := e$; (ii) $\wlp(\op,p)=p[(\lambda y.\, \ite{r = y}{y.x}{e})/x]$, if $\op = r.x := e$; (iii) $\wlp(\op,p)=p[(\lambda y.\, \ite{r = y}{y.x}{z})/x]$ where $z \in X$ is fresh, if $\op = \havoc{r.x}$; and (iv)  $\wlp(\op,p)=(q \Rightarrow p)$, if $\op = \assume{q}$}{that is defined as usual}.
In addition, we need a symbolic encoding of abstract configurations into logical formulas. For this purpose, define a function $\Gamma: \objects \to \Pred.(\objects \cup \classes.A)$ as follows: given $o \in \objects$, let $\objects(o)$ be the subset of objects in $\objects$ that are transitively reachable from $o$ in the abstract store $\absstore$, then $\Gamma(o)$ is the formula
\begin{align*}
& \Gamma(o) = 
\mathrm{distinct}(\objects(o) \cup \objects(\This)) \land \idthis=\This \land \idnull=\Null \land {}\\
&
\quad \bigwedge_{o' \in O(o) \cup \objects(\This)} \left(
\bigwedge_{p \in \unarpreds} \pv(o',p) \cdot p(o') \land \bigwedge_{a \in \funcpreds(o')} o'.a = \absstore(o'.a) \right) \\
& \text{where} \;
\pv(o',p) \cdot p(o') = 
\begin{cases}
  p(o') & \text{if $\pv(o',p)=\booltrue$}\\
  \neg p(o') & \text{if $\pv(o',p)=\boolfalse$}.
\end{cases}
\end{align*}
Now, let $\mathcal{J}^\#$ be the set of all quasi-ideal configurations $\oidealconf=(\objects,\This, q',\nu,\pv',{\absstore}',\nlf)$ that satisfy the following conditions:
\begin{itemize}
\item $\Gamma(\This) \wedge q$ is satisfiable, if $\op = \assume{q}$;
\item for all $o \in \objects$, $p \in \unarpreds$,
  if $\Gamma(o) \models \wlp(\op,p(o))$, then $\pv'(o,p)=\booltrue$, else if $\Gamma(o) \models \wlp(\op,\neg p(o))$, then $\pv'(o,p)=\boolfalse$, else $\pv'(o,p)$ is undefined;
\item for all $o,o' \in \objects$, $a \in \funcpreds(o)$,
  if $\Gamma(o) \land \Gamma(o') \models \wlp(\op,o.a=o')$, then ${\absstore}'(o,a)=o'$, else if $\Gamma(o) \land \Gamma(o') \models \wlp(\op, o.a \neq o')$, then ${\absstore}'(o,a) \neq o'$.
\end{itemize}
Then define $\reppost.S.t(\idealconf) = \reduce(\mathcal{J^\#})$.


\techreport{
Note that the size of $\mathcal{J}^\#$ is linear in the size of $\objects$ because at most one reference field in $\funcpreds$ changes its value due to $\op$ for at most one object in $\objects$. Moreover, $\mathcal{J}^\#$ can be computed from $I^\#$ with $\mathcal{O}(m n^2 +k n)$ theorem prover calls implementing the entailment checks, where $n$ is the size of $\objects$, $m$ the number of binary abstraction predicates $\funcpreds$, and $k$ the number of unary abstraction predicates $\unarpreds$.
}{}

\techreport{
\paragraph{Widening of Ideals.}

We use a widening operator that exploits the monotonicity of the heap abstraction $S_\heap^\#$ by mimicing acceleration in acceleration-based forward analyses of WSTS. 

Let $\idealconf=(\objects,\This,\curq,\nu,\pv,\absstore,\nlf)$ and $\oidealconf=(\objects',\This',\curq',\nu',\pv',\absstore',\nlf')$ be ideal configurations such that $\idealconf \leq \oidealconf$ and let $h$ be an inclusion mapping between $\idealconf$ and $\oidealconf$. Intuitively, we extrapolate from $\idealconf$ and $\oidealconf$ by increasing the nesting level of all objects in $\oidealconf$ that are not included in $\idealconf$. Formally, define the \emph{extrapolation} $\extrapol(\idealconf,\oidealconf,h)$ as the ideal configuration $(\objects',\This',\curq',\nu',\pv',\absstore',\nlf'')$ where the new nesting level function $\nlf''$ is defined as follows:
for all objects $o \in \objects'$, if there is no $o' \in h(\objects)$ such that $o'$ is transitively reachable from $o$ in $o'$ and $\nlf'(o') = \nlf'(o)$, then $\nlf''(o)=\nlf'(o)+1$, otherwise $\nlf''(o) = \nlf'(o)$. Note that we have $\idealconf \leq \oidealconf \leq \extrapol(\idealconf, \oidealconf, h)$.

We extend the extrapolation operator to the sequence widening operator $\idealwiden$ as follows: let $\mathcal{I}^\# = \idealconf_0 \dots \idealconf_n$ be a sequence of ideal configurations with $n \geq 0$ such that for all $i < n$, $\idealconf_i \leq \idealconf_n$ for some $h_i$. Then define $\oidealconf_n = \idealconf_n$ and for all $i < n$, $\oidealconf_i = \extrapol(\idealconf_i, \oidealconf_{i+1}, h_i)$ for some choice of the inclusion mapping $h_i$ between $\idealconf_i$ and $\oidealconf_{i+1}$. Then define $\idealwiden(\mathcal{I}^\#)=\oidealconf_0$.
In all other cases let $\idealwiden(\mathcal{I}^\#)$ be undefined.

\begin{proposition}
  $\idealwiden$ is a sequence widening operator for depth-bounded sequences of ideal configurations.
\end{proposition}
}{}

\subsection{Computing the Dynamic Package Interface}\label{sect:compsubsect}

We now describe how to compute the dynamic package interface for a given package $P$. The computation proceeds in three steps. First, we compute the OO program $S=(P,I)$ that is obtained by extending $P$ with its most general client $I$. Next, we compute the abstract covering system $S_\ideal^\#$ of $S$ as described in Sections~\ref{sec:abstractsem} and \ref{sec:analysis}. We assume that the user provides sets of unary and binary abstraction predicates $\unarpreds$, respectively, $\funcpreds$ that define the heap abstraction. Alternatively, we can use heuristics to guess these predicates from the program text of the package. For example, we can add all branch conditions in the program description as predicates.
Finally, we extract the package interface from the computed abstract covering system.
We describe this last step in more detail.

We can interpret the abstract covering system as a numerical program. The control locations of this program are the ideal configurations in $S_\ideal^\#$. With each abstract object occurring in an ideal configuration we associate a counter. The value of each counter denotes the number of concrete objects represented by the associated abstract object. While computing $S_\ideal^\#$, we do some extra book keeping and compute for each transition of $S_\ideal^\#$ a corresponding numerical transition that updates the counters of the counter program. These updates capture how many concrete objects change their representation from one abstract object to another.
A formal definition of such numerical programs can be found in~\cite{StructuralCounterAbs}.

The dynamic package interface $\interface(P)$ of $P$ is a numerical program that is an abstraction of the numerical program associated with $S_\ideal^\#$. The control locations of $\interface(P)$ are the ideal configurations in $S_\ideal^\#$ that correspond to call sites, respectively, return sites to public methods of classes in $P$, in the most general client. A connecting path in $S_\ideal^\#$ for a pair of such call and return sites (along with all covering transitions connecting ideal configurations on the path) corresponds to the abstract execution of a single method call.
We refer to the restriction of the numerical program $S_\ideal^\#$ to such a path and all its covering transitions as a \emph{call program}. Each object mapping of $\interface(P)$ represents a summary of one such call program. Hence, an object mapping of $\interface(P)$ describes, both, how
a method call affects the state of objects in a concrete heap configuration and how many objects are effected.

Note that a call program may contain loops because of loops in the method executed by the call program. The summarization of a call program therefore requires an additional abstract interpretation. The concrete domain of this abstract interpretation is given by transitions of counter programs, i.e., relations between valuations of counters. The concrete fixed point is the transitive closure of the transitions of the call program. The abstract domain provides an appropriate abstraction of numerical transitions. How precisely the package interface captures the possible sequences of method calls depends on the choice of this abstract domain and how convergence of the analysis of the call programs is enforced.
We chose a simple abstract domain of \emph{object mappings} that distinguishes between a constant number, respectively, arbitrary many objects transitioning from an abstract object on the call site of a method to another on the return site. However, other choices are feasible for this abstract domain that provide more or less information than object mappings.

\section{Experiences}\label{sect:references}
We have implemented our system by extending the Picasso tool \cite{Zufferey12:Ideal}.
Picasso uses an ideal abstraction to compute the covering sets of depth-bounded graph rewriting systems.
Our extension of Picasso computes a dynamic package interface from a graph rewriting system that encodes the semantics of the method calls in a package.\footnote{Our tool and the full results of our experiments can be found at: \url{http://pub.ist.ac.at/~zufferey/picasso/dpi/index.html}}

For a graph-rewriting system that represents a package, our tool first computes its covering set. 
Using the elements of the covering set, it then performs unfolding over them with respect to all distinct method calls to derive the object mappings of the DPI of the package. 
The computation of the covering elements and the object mappings are carried out as described in the previous section.

In addition to the \srcf{Viewer} and \srcf{Label} example, described in Section \ref{sect:overview}, 
we have experimented with other examples: a set and iterator package, which we used as our running example in the previous sections, and the JDBC statement and result package. 
In the remainder of this section, we present the DPIs for these packages.

\smallskip
\noindent\emph{Set and Iterator.}
%
We considered a simple implementation of the \srcf{Set} and \srcf{Iterator} classes in which the items in a set are stored in a linked list.
The \srcf{Iterator} class has the usual \srcf{next}, \srcf{has\_next}, and \srcf{remove} methods.
The \srcf{Set} class provides a method \srcf{iterator}, which creates an \srcf{Iterator} object associated with the set, and an \srcf{add} method, which adds a data element to the set.
The interface of the package is meant to avoid raising exceptions of types \srcf{NoSuchElementException} and \srcf{ConcurrentModificationException}.
A \srcf{NoSuchElementException} is raised whenever the \srcf{next} method is called on an iterator of an empty list. 
A \srcf{ConcurrentModificationException} is raised whenever an iterator accesses the set after the set has been modified, 
either through a call to the \srcf{add} method of the set or through a call to the \srcf{remove} method of another iterator. 
An iterator that removes an element can still safely access the set afterwards.
(Similar restrictions apply to other Collection classes that implement \srcf{Iterable}.)

We used the following predicates.
The unary abstraction predicate $\mathit{empty(s)}$ determines whether the size of a \srcf{Set} object $s$ is zero or not.
For \srcf{Iterator} objects, we specified two predicates that rely on the attributes of both the \srcf{Set} and the \srcf{Iterator} classes.  
The predicate $\mathit{sync(i)}$ holds for an \srcf{Iterator} object $i$ that has the same version as its associated \srcf{Set} object.
The predicate $\mathit{mover(i)}$ specifies that the position of an \srcf{Iterator} object $i$ in the list of its associated \srcf{Set} object is less than the size of the set.

\begin{figure}%
\centering
\subfigure[Abstract heap configuration $H_0$ of the set-iterator package using predicates: $empty(s) \equiv s.\mathtt{size} = 0$, $\mathit{synch(i)} \equiv i.\mathtt{iver}=i.\mathtt{iter\_of.sver}$, and $\mathit{mover(i)} \equiv i.\mathtt{pos} <i.\mathtt{iter\_of.size}$.]{\input{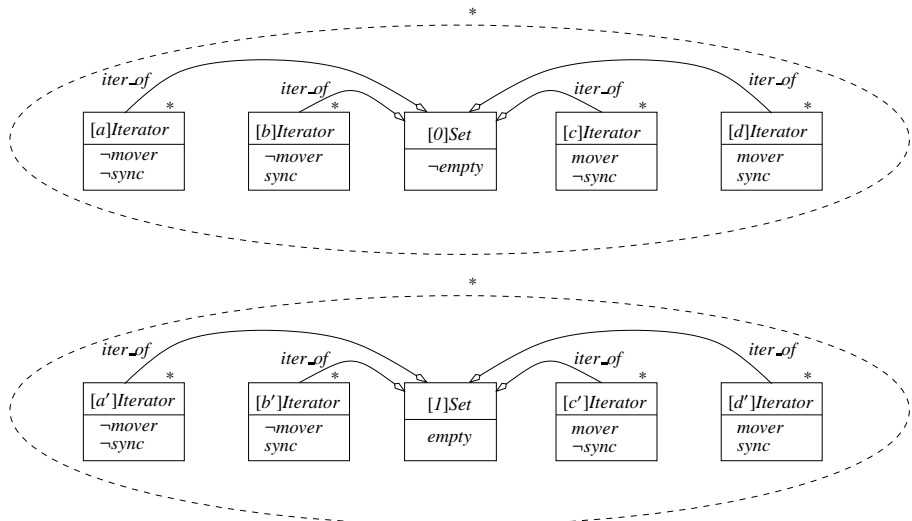}
\label{fig:setitercover1}
}
\subfigure[Object mapping for  $d.\srcf{remove()}$]{\input{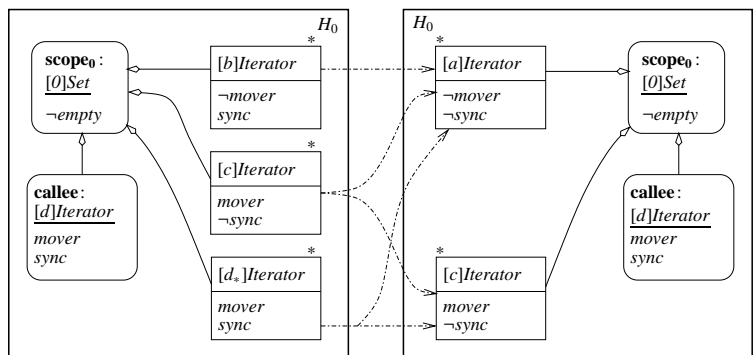}
\label{fig:sit2}
}
\subfigure[Another possible object mapping for $d.\srcf{remove()}$]{\input{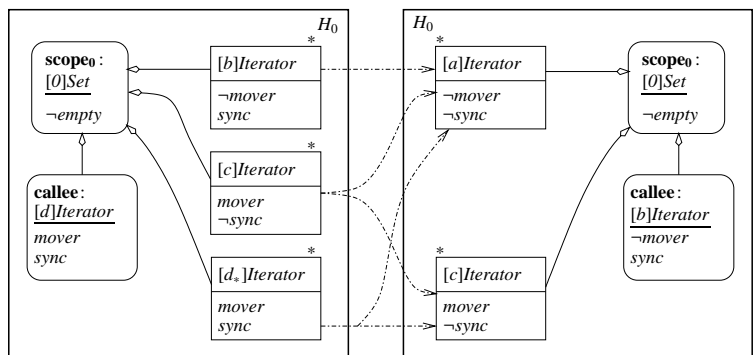}
\label{fig:sit6}
}
\caption{Set-iterator DPI: The abstract heap of the package together with two of its object mappings}
\label{fig:setiter}
\end{figure}

Our algorithm computes the maximal configurations $H_0$, shown in 
Figure~\ref{fig:setitercover1}. 
There are also four error abstract heap configurations, which correspond to different cases in which one
of the two exceptions is raised for an \srcf{Iterator} object.
Figure~\ref{fig:sit2} and \ref{fig:sit6} show the object mappings of two transitions. 
For the sake of clarity, we have omitted the name of the reference attribute $\mathit{iter\_of}$ in the mappings.
While both transitions invoke the \srcf{remove()} method on an \srcf{Iterator} object whose $\mathit{mover}$ and $\mathit{sync}$ predicates 
are true, they have different effects because they capture different concrete heaps represented by the same abstract heap $H_0$. 
The first transition shows the case when the callee object remains a mover, i.e., its \srcf{pos} field does 
not refer to the last element of the list. 
The second transition shows the case when the callee object becomes a non-mover; i.e., before the call to \srcf{remove},
its \srcf{pos} field refers to the last element of the linked list. 
In both transitions, the other \srcf{Iterator} objects that reference the same \srcf{Set} object all become unsynced. Some of these objects remain movers while some of them become non-movers.
In both cases, the callee remains sycned.
There are two other symmetric transitions that capture the cases in which the \srcf{Set} object becomes empty.

\begin{comment}
For the sake of brevity, we have presented the interface for a package which only allows a single \srcf{Set} 
object, however, the interface for the case when there is more than one \srcf{Set} object is similar, 
except that each abstract heap has an extra level of nesting.

\begin{figure}[t]%
\centering
\input{figs/cover1.pstex_t}
\caption{Abstract heap configuration $H_0$ of the set-iterator package using predicates: $empty(s) \equiv s.\mathtt{size} = 0$, $\mathit{synch(i)} \equiv i.\mathtt{iver}=i.\mathtt{iter\_of.sver}$, and $\mathit{mover(i)} \equiv i.\mathtt{pos} <i.\mathtt{iter\_of.size}$.}
\label{fig:setitercover1}
\end{figure}
\end{comment}


\begin{comment}
\begin{figure}[t]%
\centering
%
\subfigure[Object mapping for  $\mathit{\preobj{d}Iterator.\srcf{remove()}}$]{\input{figs/t2.pstex_t}
\label{fig:sit2}
}\quad
\subfigure[Another possible object mapping for $\mathit{\preobj{d}Iterator.\srcf{remove()}}$]{\input{figs/t6.pstex_t}
\label{fig:sit6}
}
\caption{Two of the object mappings of the set-iterator DPI}
\label{fig:setiter}
\end{figure}
\end{comment}

\smallskip
\noindent
\emph{JDBC}
(Java Database Connectivity) is a Java technology that enables access to databases of different types.  
We looked at three classes of JDBC for simple query access to databases: \srcf{Connection}, \srcf{Statement}, and \srcf{ResultSet}.  
A \srcf{Connection} object provides a means to connect to a database.
A \srcf{Statement} object can execute an SQL query statement through a \srcf{Connection} object.  
A \srcf{ResultSet} object stores the result of the execution of a \srcf{Statement} object.  
All objects can be closed explicitly.  
If a \srcf{Statement} object is closed, its corresponding \srcf{ResultSet} object is also implicitly closed.  
Similarly, if a \srcf{Connection} object is closed, its corresponding \srcf{Statement} objects are implicitly closed, and so are the open \srcf{ResultSet} objects 
of these \srcf{Statement} objects. 
Java documentation states: ``By default, only one \srcf{ResultSet} object per \srcf{Statement} object can be open at the same time. 
Therefore, if the reading of one \srcf{ResultSet} object is interleaved with the reading of another, each 
must have been generated by different \srcf{Statement} objects. 
All execution methods in the \srcf{Statement} interface implicitly close a statement's current \srcf{ResultSet} object if an open one 
exists.'' 

Figure \ref{fig:jdbch01} shows the maximal abstract heap $H_0$ computed by our tool.
It represents all safe configurations in which the \srcf{Connection} object is either open or closed. 
Each type of object has a corresponding ``open'' predicate that specifies whether it is open or not.
The node $c$ is of particular interest, as it demonstrates the preciseness of our algorithm: It has the same nesting level as the node $b$, which means that an open \srcf{Statement} object can have at most one open \srcf{ResultSet} object associated with it.
We omit showing abstract heaps capturing erroneous configurations. 
Lastly, Figure \ref{fig:jdbct2} shows the object mapping for the \srcf{close} method call on an open \srcf{Statement} object with an open \srcf{ResultSet} object. 
The mapping takes the \srcf{Statement} object and the open \srcf{ResultSet} object to their corresponding closed objects. 
All other objects remain the same.

\begin{figure}[t]%
\centering
\subfigure[Abstract heap configurations $\mathit{H_0}$ of JDBC package]{\input{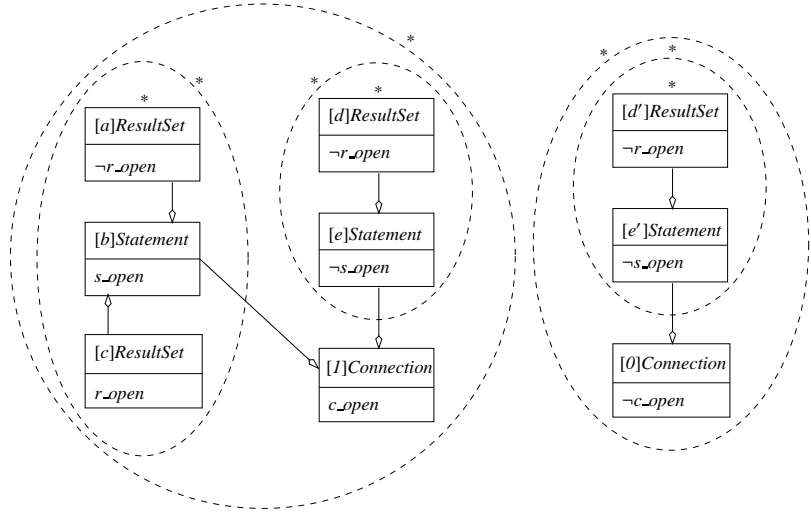}
\label{fig:jdbch01}
}
\subfigure[Object mapping for  $\mathit{b}.\srcf{close()}$]{\input{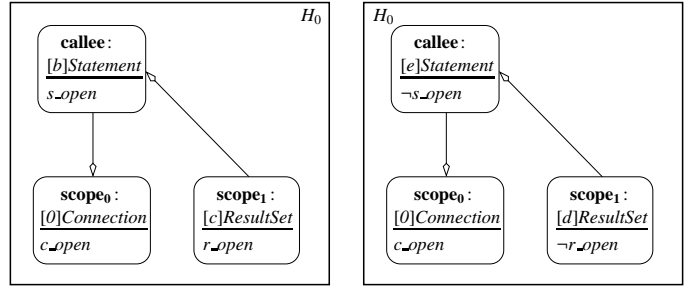}
\label{fig:jdbct2}
}
\caption{JDBC DPI: The abstract heap of JDBC together with one of its object mappings}
\label{fig:jdbcdpi}

\end{figure}



\section{Conclusions}\label{sect:conclu}

We have formalized DPIs for OO packages with inter-object references, developed a novel ideal
abstraction for heaps, and given a sound and terminating algorithm to compute DPIs on the (infinite) abstract domain. 
In contrast to previous techniques for multiple objects based on mixed static-dynamic analysis 
\cite{Nanda05:Deriving,Pradel12:Static}, our algorithm is guaranteed to be sound.
While our algorithm is purely static, an interesting future direction is to effectively combine it with dual, dynamic 
\cite{Ghezzi09:Synthesizing,DBLP:conf/issta/DallmeierKMHZ10,Pradel12:Static} and template-based \cite{DBLP:journals/ase/WasylkowskiZ11} techniques. 

\bibliographystyle{splncs03}

\bibliography{main}

\end{document}